\newtheorem{theorem}{Theorem}
\newtheorem{algorithm}{Algorithm}
\newtheorem{claim}{Claim}
\newtheorem{corollary}{Corollary}
\newtheorem{definition}{Definition}
\newtheorem{lemma}{Lemma}
\newtheorem{proposition}{Proposition}
\newtheorem{fact}{Fact}
\newcommand{\dn}{correlation metric}
\title{Simultaneously Approximating All $\ell_p$-norms\\ in Correlation Clustering }
\date{}
\author{Sami Davies\thanks{Department of EECS and the Simons Institute for the Theory of Computing, UC Berkeley. Supported by an NSF Computing Innovation Fellowship while at Northwestern University.} \and Benjamin Moseley\thanks{Carnegie Mellon University. Benjamin Moseley and Heather Newman were supported in part by  a Google Research Award, an Inform Research Award, a Carnegie Bosch Junior Faculty Chair, and NSF grants CCF-2121744  and  CCF-1845146.} \and Heather Newman$^\dagger$}
\begin{document}
\maketitle

\abstract{This paper considers correlation clustering on unweighted complete graphs. We give a combinatorial algorithm that returns a \emph{single} clustering solution that is \emph{simultaneously} $O(1)$-approximate for all $\ell_p$-norms of the disagreement vector;
in other words, a combinatorial $O(1)$-approximation of the \emph{all-norms} objective for correlation clustering.
This is the first proof that minimal sacrifice is needed in order to optimize different norms of the disagreement vector. In addition, our algorithm is the first combinatorial approximation algorithm for the $\ell_2$-norm objective, and more generally the first combinatorial algorithm for the $\ell_p$-norm objective when $1 < p < \infty$. 
It is also faster than all previous algorithms that minimize the $\ell_p$-norm of the disagreement vector, with run-time $O(n^\omega)$, where $O(n^\omega)$ is the time for matrix multiplication on $n \times n$ matrices. 
When the maximum positive degree in the graph is at most $\Delta$, this can be improved to a run-time of $O(n\Delta^2 \log n)$. 
}

\section{Introduction}
Correlation clustering is one of the most prominent problems in clustering, 
as it cleanly models community detection problems~\cite{veldt2018, SDELM21} and provides a way to decompose complex network structures~\cite{Wirth17, mccallum2004conditional}. 
The input to the unweighted correlation clustering problem is a complete graph $G=(V,E)$, where $|V|=n$ and each edge $e \in E$ is labeled positive $(+)$ or negative $(-)$.  
If the edge $(u,v)$ is positive, this indicates that $u$ and $v$ are similar, 
and analogously if the edge $(u,v)$ is negative, this indicates that $u$ and $v$ are dissimilar. 
The output of the problem is a partition of the vertex set into parts $C_1, C_2, \ldots$, where each part represents a cluster.   

The output should cluster similar vertices together and separate dissimilar vertices.
Specifically, for a fixed clustering (i.e., partition of the vertices), a positive edge $(u, v)$ is a \emph{disagreement} with respect to the clustering if $u$ and $v$ are in different clusters and an \emph{agreement} if $u$ and $v$ are in the same cluster.
Similarly, a negative edge $(u,v)$ is a disagreement with respect to the clustering if $u$ and $v$ are in the same cluster 
and an agreement if $u$ and $v$ are in different clusters.  
The goal is to find a clustering that minimizes some objective
that is a function of the disagreements.\footnote{Note that the sizes and number of clusters are unspecified.}
For example, the most commonly studied objective minimizes the total number of disagreements.

As an easy example to illustrate the problem,
consider a social network. Every pair of people has an edge between them, 
and the edge is positive if the two people have ever met before, and negative otherwise.
The goal of correlation clustering translates to partitioning all the people into clusters so that people are in the same cluster as their friends/acquaintances and in different clusters than strangers. 
The difficulty in constructing a clustering is that the labels may not be consistent, making disagreements unavoidable. Consider in the social network what happens when there is one person with two friends who have never met each other ($u,v,w$ with $(u,v)$ and $(u,w)$ positive but $(v,w)$ negative).
The choice of objective matters in determining the best clustering.

For a given clustering $\mathcal{C}$, let $y_{\mathcal{C}}(u)$ denote the number of edges incident to $u$ that are disagreements with respect to $\mathcal{C}$ (we drop $\mathcal{C}$ and write $y$ when it is clear from context).  
The most commonly considered objectives are ${\lVert y_{\mathcal{C}} \rVert}_p = \sqrt[p]{\sum_{u \in V} y_{\mathcal{C}}(u)^p}$ for $p \in \mathbb{R}_{\geq 1} \cup \{\infty\}$,
the $\ell_p$-norms of the \emph{disagreement vector} $y$.   
Note that the optimal objective values may drastically vary for different norms too.  (For instance, in the example in Appendix A of \cite{PM16}, $V=A \sqcup B$\footnote{$\sqcup$ denotes disjoint union.}, where $|A| = |B| = n/2$, and all edges are positive except for a negative matching between $A$ and $B$. The optimal $\ell_\infty$-norm objective value is 1 whereas the optimal for $\ell_1$ is $\Theta(n)$.)
When $p=1$, this objective minimizes the total number of disagreements.  Setting $p = \infty$ minimizes the maximum number of disagreements incident to any node, ensuring a type of worst-case fairness.\footnote{In the social network example, minimizing the $\ell_1$-norm corresponds to finding a clustering that minimizes the total number of friends who are separated plus the total number of strangers who are in the same cluster. The $\ell_\infty$-norm corresponds to finding a clustering minimizing the number of friends any person is separated from plus the number of strangers in that person's same cluster.}  
Balancing these two extremes---average welfare on one hand and fairness on the other---is the $\ell_2$-norm, which minimizes the variance of the disagreements at each node.

Correlation clustering was proposed by Bansal, Blum, and Chawla \citeyearpar{BBC04} with the objective of minimizing the $\ell_1$-norm of the disagreement vector.
The problem is \textsf{NP}-hard and several approximation algorithms have been proposed \cite{BBC04, ACN-pivot, chawla2015near, cohen2022correlation}.  
Puleo and Milenkovic \citeyearpar{PM16} proposed studying $\ell_p$-norms of the disagreement vector for $p >1$, and they give a $48$-approximation for any fixed $p$. 
Charikar, Gupta, and Schwartz \citeyearpar{CGS17} introduced an improved $7$-approximation, which  Kalhan,  Makarychev, and  Zhou \citeyearpar{KMZ19} further improved to a $5$-approximation. When $p >1$, up until recently, the only strategies were LP or SDP rounding, and it has been of interest to develop fast combinatorial algorithms \cite{veldt2022correlation}.   
Davies, Moseley, and Newman \citeyearpar{DMN23} introduced a combinatorial $O(1)$-approximation algorithm for $p= \infty$ (see also \cite{minmax4approx} for a different combinatorial algorithm),
and leave open the question of discovering a combinatorial $O(1)$-approximation algorithm for $1 < p < \infty$.

In all prior work, solutions obtained for $\ell_p$-norms are tailored to each norm (i.e., $p$ is part of the input to the algorithm), and 
it was not well-understood what the trade-offs were between solutions
that optimize different norms.  
Solutions naively optimizing one norm can be arbitrarily bad for other norms (see Figure \ref{fig: star}). 
A natural question is whether this loss from using a solution to one objective for another is avoidable. 
More specifically:

\begin{figure}
    \centering
\includegraphics[width = 8cm]{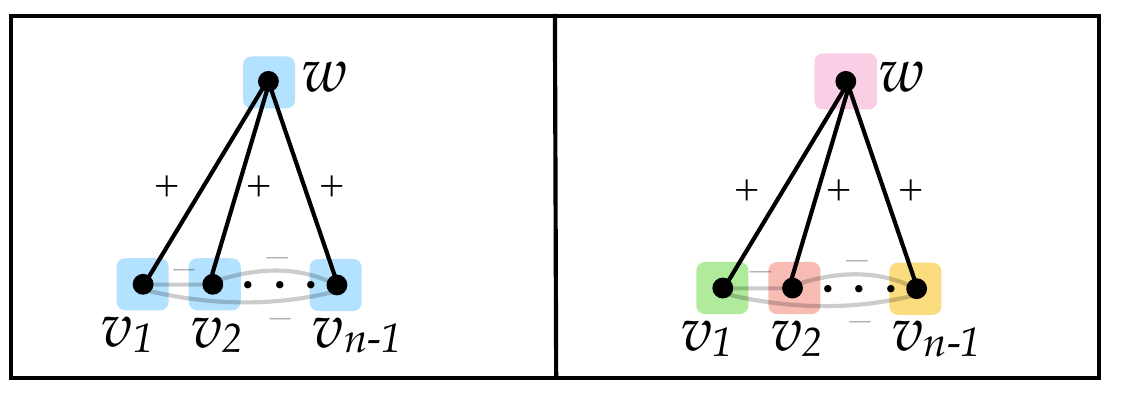}
    \caption{Two clusterings of the star graph, which has one node ($w$) with positive edges to all nodes, and the rest of the edges negative. \textbf{Left: }Clustering assigns all nodes to one (blue) cluster, and is (almost) optimal for the $\ell_\infty$-norm with cost $\Theta(n)$. 
    \textbf{Right: }Clustering assigns all nodes to different clusters and is (almost) optimal for the $\ell_1$-norm with cost $\Theta(n)$. The left solution is terrible for the $\ell_1$-norm, as the negative clique has $\Theta(n^2)$ edges that are disagreements. }
    \label{fig: star}
\end{figure}

\begin{center}
\emph{For any graph input to unweighted, complete correlation clustering,
does there exist a partition (clustering) that is \textbf{simultaneously} $O(1)$-approximate for all $\ell_p$-norm objectives?}
\end{center}

 Phrased another way, does there exist a \emph{universal algorithm} for $\ell_p$-norm correlation clustering---one which is guaranteed to produce a solution that well-approximates many objectives at once?
When the goal is to simultaneously optimize every $\ell_p$-norm, this is known as the \emph{all-norms} objective\footnote{In some of the literature, for instance that of Golovin et al. \cite{golovin2008all}, it is called the \emph{all-$\ell_p$-norms} objective.}. Universal algorithms and the all-norms objective are well-studied in combinatorial optimization problems, such as load balancing and set cover (see Section \ref{sec:related} for more discussion).  In the context of correlation clustering, such an algorithm outputs a partition that has good global performance (i.e. $\ell_1$-norm) and also has no individual node with too many adjacent disagreements (i.e. $\ell_\infty$-norm). Universal algorithms exist for some problems and are provably impossible for others. The question looms, what can be said about universal algorithms for correlation clustering?

As far as we are aware, there are no known results for the all-norms objective in other clustering problems. In fact, for the popular $k$-median and $k$-center problems, it is actually \textit{impossible} to $O(1)$-approximate (or even $o(\sqrt{n})$-approximate) these two objectives simultaneously \cite{2018bicriteria}.

\subsection{Results}
This paper is focused on optimizing all $\ell_p$-norms ($p \geq 1$) for correlation clustering at the same time.  The main result of the paper answers the previous question positively: perhaps surprisingly, there is a single clustering that simultaneously $O(1)$-approximates the optimal for all $\ell_p$-norms. Further, it can be found through an \emph{efficient combinatorial algorithm}.   This is also the first known combinatorial approximation algorithm for the $\ell_2$-norm objective and more generally $\ell_p$-norm objective for fixed $2 \leq p < \infty$.  

In what follows, let $O(n^\omega)$ denote the run-time of $n \times n$ matrix multiplication.  

\begin{theorem}\label{thm: main-acm}
Let $G=(V,E)$ be an instance of unweighted, complete correlation clustering on $|V|=n$ nodes. There exists a combinatorial algorithm returning a single clustering that is simultaneously an $O(1)$-approximation\footnote{Note this is independent of $p$.} for all $\ell_p$-norm objectives, for all $p\in \mathbb{R}_{\geq 1} \cup \{\infty\}$, and its run-time is $O(n^\omega)$.
\end{theorem}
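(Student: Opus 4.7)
The plan is to produce a single clustering whose disagreement vector is dominated pointwise (up to a constant) by a $p$-oblivious lower bound, from which the all-norms guarantee is immediate by monotonicity of $\ell_p$-norms. Concretely, I would exhibit a vector $L:V\to\mathbb{R}_{\geq 0}$ defined purely from the input graph (independent of any clustering) and prove two pointwise bounds: first, $L(u) \leq c_2 \cdot y_{\mathcal{C}^*}(u)$ for every vertex $u$ and every clustering $\mathcal{C}^*$; second, $y_{\mathcal{C}}(u) \leq c_1 \cdot L(u)$ for the output $\mathcal{C}$ of the algorithm. Chaining these gives $\|y_{\mathcal{C}}\|_p \leq c_1 c_2 \, \|L\|_p \leq c_1 c_2 \, \|y_{\mathcal{C}^*_p}\|_p$ for every $p \in [1,\infty]$, with no dependence on $p$.

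For the algorithm, I would build on the neighborhood-similarity paradigm that works in the $\ell_\infty$ case of Davies, Moseley, and Newman. Write $N^+(u)$ for the positive neighborhood of $u$. The first step is to compute the intersections $|N^+(u)\cap N^+(v)|$ for all pairs $u,v$ by squaring the positive adjacency matrix; this is the $O(n^\omega)$ bottleneck, and everything else is nearly linear in the number of positive edges. Call a positive edge $(u,v)$ \emph{inconsistent} when $|N^+(u)\,\triangle\, N^+(v)|$ is large relative to $\min(|N^+(u)|,|N^+(v)|)$, and a negative edge $(u,v)$ inconsistent when $|N^+(u)\cap N^+(v)|$ is large. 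Form clusters by grouping vertices whose positive neighborhoods are mutually consistent, and peel vertices with many inconsistent incident edges off as singletons. Define $L(u)$ to be the number of inconsistent edges incident to $u$.

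The analysis splits into two per-vertex bounds. For the lower bound on $y_{\mathcal{C}^*}(u)$: consider an inconsistent positive edge $(u,v)$. If $v$ lies outside $u$'s cluster in $\mathcal{C}^*$, then $(u,v)$ is itself a disagreement at $u$; if $v$ lies inside, then the large $N^+(u)\,\triangle\, N^+(v)$ injects many disagreements incident to $u$, which are charged back to the inconsistency of $(u,v)$ via a local double counting. Inconsistent negative edges are handled symmetrically, yielding $L(u)\leq c_2\, y_{\mathcal{C}^*}(u)$ uniformly in $\mathcal{C}^*$. For the upper bound on $y_{\mathcal{C}}(u)$: because clusters are built from mutually-consistent neighborhoods, every disagreement at $u$ in $\mathcal{C}$ can be witnessed by an incident inconsistent edge, giving $y_{\mathcal{C}}(u)\leq c_1 L(u)$.

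The main obstacle will be calibrating the ``inconsistency'' threshold so that both pointwise bounds hold simultaneously with absolute constants: pushing the threshold up strengthens the lower bound against $\mathcal{C}^*$ but weakens the algorithm's upper bound (making it harder to cluster correctly), and vice versa. A single choice that makes both go through will likely require a careful, amortized argument in which the constant hidden inside ``large'' in the definition of inconsistency is tuned against the charging ratios in both directions. A secondary difficulty is that peeled high-inconsistency vertices should not create cascading disagreements at their well-behaved neighbors; the peeling rule must be arranged so each such disagreement is absorbed into the peeled vertex's own $L(u)$ rather than propagating into $L$ values of clean vertices, which otherwise could not themselves afford it.
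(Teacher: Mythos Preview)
Your plan hinges on a pointwise lower bound $L(u)\le c_2\, y_{\mathcal C^*}(u)$ for \emph{every} vertex $u$ and \emph{every} clustering $\mathcal C^*$, and this is where the argument breaks. Take an inconsistent positive edge $(u,v)$ with $v\in C^*(u)$. For $w\in N^+(u)\setminus N^+(v)$ we have $(u,w)\in E^+$ and $(v,w)\in E^-$; if $w\in C^*(u)$ then the disagreement is $(v,w)$, incident to $v$ but not to $u$, and if $w\notin C^*(u)$ it is $(u,w)$. Nothing prevents all of the witnesses from landing on $v$, so the symmetric difference need not ``inject many disagreements incident to $u$''. The same issue arises for inconsistent negative edges: $|N^+(u)\cap N^+(v)|$ large with $v\notin C^*(u)$ forces, for each common positive neighbor $w$, a disagreement on $(u,w)$ \emph{or} $(v,w)$ depending on where $w$ sits, and again all the cost can be absorbed by $v$ (or by $w$). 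Concretely, on a star with center $w$, every leaf $v_i$ has $y_{\mathcal C^*}(v_i)=1$ in the all-singletons clustering, yet any reasonable threshold makes many of the negative edges $(v_i,v_j)$ inconsistent (their positive neighborhoods share $w$), so $L(v_i)$ is $\Theta(n)$. No calibration of the threshold fixes this: the obstruction is that inconsistency at $u$ is, by its nature, paid for by disagreements at $u$'s neighbors.

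This is exactly why the paper does \emph{not} attempt a pointwise comparison to $\mathcal C^*$. Its surrogate vector (the per-vertex fractional cost $y_u$ of the adjusted correlation metric $f$) is compared to $\textsf{OPT}$ only in aggregate: one proves $\sum_u y_u^p \le c^p\sum_u y_{\mathcal C^*}(u)^p$ via Jensen and a charging scheme in which a vertex $u$ is credited using $y_{\mathcal C^*}(w)$ for nearby $w$, with each $w$'s contribution amortized over its cluster. The pointwise inequality one \emph{does} get is on the algorithm side, $\textsf{ALG}(u)\le O(1)\cdot y_u$, which comes for free from the KMZ rounding and is what makes a single clustering work for all $p$. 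Your proposal would go through if you replaced the pointwise lower bound by a norm-wise one, $\|L\|_p\le c_2\,\|y_{\mathcal C^*}\|_p$ for every $p$; but then you are essentially re-deriving the paper's fractional-cost lemma, and your ``local double counting'' must become the global charging argument that moves cost between neighboring vertices. The secondary cascading issue you flag is real and is handled in the paper by a bipartite double-counting on the rounded-up vertices (again aggregate, not pointwise).
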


The algorithm gives the \emph{fastest run-time} of any $O(1)$-approximation algorithm for the $\ell_p$-norm objective when $p  \in \mathbb{R}_{>1}$.
Further, the run-time can be improved  when the positive degree of the graph is bounded, as shown in the following corollary. 

\begin{corollary}\label{cor: main-acm-sparse}
Let $\Delta$ denote the maximum positive degree in an instance $G=(V,E)$ of unweighted, complete correlation clustering on $|V|=n$ nodes. Suppose $G$ is given as an adjacency list representation of its positive edges. There exists a combinatorial algorithm returning a single clustering that is simultaneously an $O(1)$-approximation for all $\ell_p$-norm objectives, for all $p\in \mathbb{R}_{\geq 1} \cup \{\infty\}$, and its run-time is $O(n\Delta^2 \log n)$.
\end{corollary}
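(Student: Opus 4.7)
The plan is to reuse the combinatorial algorithm underlying Theorem~\ref{thm: main-acm} essentially verbatim and just replace its most expensive subroutine with a sparse implementation tailored to bounded positive degree. The first step is to pinpoint where the $O(n^\omega)$ in Theorem~\ref{thm: main-acm} comes from. The algorithm is combinatorial (no LP/SDP), so matrix multiplication can only enter through computing a table of pairwise positive-neighborhood statistics: quantities like $|N^+(u)\cap N^+(v)|$, or equivalently $A^+(A^+)^\top$ where $A^+$ is the positive adjacency matrix. Under the bounded-degree assumption, this table is sparse: there are only $O(n\Delta^2)$ pairs $(u,v)$ with $|N^+(u)\cap N^+(v)|>0$.

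Next I would compute this sparse table directly, without ever forming an $n\times n$ matrix. For each $w\in V$, enumerate all ordered pairs $(u,v)$ of positive neighbors of $w$; each such enumeration witnesses that $w$ is a common positive neighbor of $u$ and $v$. Since $|N^+(w)|\leq \Delta$, the total work is $\sum_w |N^+(w)|^2 = O(n\Delta^2)$. Storing the contributions in a balanced BST (or hash map) indexed by unordered pairs $(u,v)$ accumulates $|N^+(u)\cap N^+(v)|$ for every pair that occurs, in time $O(n\Delta^2\log n)$. Pairs not touched by this enumeration have intersection~$0$ and need not be stored.

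The second ingredient is to argue that the rest of the algorithm — pivot selection, ball/cluster construction, and the assignment of non-pivot vertices to clusters — only ever inspects pairs $(u,v)$ that already appear in the sparse table (up to additional vertices in $N^+(u)\cup\{u\}$, which are themselves bounded by $\Delta$). This is the natural structural property of any ``Pivot''-style scheme on the positive-neighborhood metric: a candidate cluster member of a pivot $u$ must either be positively adjacent to $u$ or share substantial positive neighborhood with $u$, so it lives in the sparse support. Scanning each candidate against the relevant pivots, maintained in a priority queue keyed by (e.g.) cluster radius or positive degree, adds the extra $\log n$ factor for insertions, deletions, and extract-min operations. Totaling over all pivots gives $O(n\Delta^2\log n)$.

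The main obstacle I anticipate is the bookkeeping to guarantee that the algorithm truly never needs to access zero-overlap pairs. If Theorem~\ref{thm: main-acm}'s algorithm has any step that iterates over all $v\in V$ for some pivot $u$ (e.g.\ to decide membership via a threshold check), that step must be rewritten to iterate only over the support of $u$'s sparse row and over $N^+(u)$, using the fact that outside this support the similarity is exactly $0$ and hence the threshold fails trivially. Once this restriction is implemented — and maintained as pivots are chosen and vertices are removed from the instance — the runtime bound $O(n\Delta^2\log n)$ follows directly, and correctness is inherited verbatim from Theorem~\ref{thm: main-acm} since the output clustering is identical.
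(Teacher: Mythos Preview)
Your approach is essentially the same as the paper's: compute the correlation metric only on the $O(n\Delta^2)$ pairs sharing a common positive neighbor (handling $d_{uv}=1$ implicitly), then run the KMZ rounding restricted to this sparse support, citing \cite{DMN23} for both pieces. One omission: the algorithm of Theorem~\ref{thm: main-acm} uses the \emph{adjusted} correlation metric $f$ (Definition~\ref{def: adj_corr_metric}), not $d$, and the only new content in the paper's proof of this corollary is verifying that Steps~2 and~3 of that definition can be carried out in $O(\Delta^2)$ time per vertex and only \emph{prune} the sparse lists further (no $f_{uv}$ that was already $1$ becomes smaller), so the sparse KMZ rounding from \cite{DMN23} still applies; you should address this explicitly rather than folding it into ``the algorithm underlying Theorem~\ref{thm: main-acm}.'' Also, the rounding is not Pivot but the KMZ ball-growing scheme (Algorithm~\ref{KMZ-alg}): it selects $\arg\max_u L_t(u)$ and takes $\textsf{Ball}(u^*,2/5)$, which indeed only touches pairs with $d_{uv}<1$, so your structural claim holds, but the description should match the actual algorithm.
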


The run-time of the algorithm matches the fastest known algorithm for the $\ell_\infty$-norm objective \cite{DMN23}, 
in both the general case and when the maximum positive degree is bounded.
The best-known algorithm before our work relied on solving a convex relaxation on $|V|^2$ variables and $|V|^3$ constraints.
We improve the run-time by avoiding this bottleneck.

In the setting when the positive edges form a regular graph, 
the interested reader may also find a rather beautiful proof (which is much simpler than that of Theorem \ref{thm: main-acm}) in Section \ref{sec: regular_warmup} showing there is a solution that is simultaneously $O(1)$-approximate for the $\ell_1$-norm and $\ell_{\infty}$-norm objectives.

\subsection{Related work}
\label{sec:related}
Correlation clustering was introduced by 
Bansal, Blum, and Chawla \citeyearpar{BBC04}.
The version they introduced also studies the problem on unweighted, complete graphs, but is concerned with minimizing the $\ell_1$-norm of the disagreement vector.
For this problem, Ailon, Charikar, and Newman \citeyearpar{ACN-pivot} designed the Pivot algorithm, which is a randomized algorithm that in expectation obtains a 3-approximation.
While we know algorithms with better approximations for $\ell_1$ correlation clustering than Pivot~\cite{chawla2015near, cohen2022correlation}, 
the algorithm remains a baseline in correlation clustering due to its simplicity. (However, Pivot can perform arbitrarily badly---i.e., give $\Omega(n)$ approximation ratios---for other $\ell_p$-norms; see again the example in Appendix A of \cite{PM16}.)
It is an active area of research to develop algorithms for the $\ell_1$-norm that focus on practical scalability~\cite{bonchi2014, chierichetti2014, pan2014scaling, SDELM21, single-pass2023}. 
Correlation clustering has also been studied on non-complete, weighted graphs \cite{CGS17, KMZ19}, with conditions on the cluster sizes \cite{puleo2015correlation}, and with asymmetric errors \cite{JafarovKMM21}. 
In fact, in recent work Veldt~\cite{veldt2022correlation} highlighted the need for deterministic techniques in correlation clustering that do not use linear programming.
Much interest in correlation clustering stems from its connections to applications,
including community detection, natural language processing, location area planning, and gene expression~\cite{veldt2018, SDELM21, Wirth17, mccallum2004conditional, ben1999clustering, demaine2003correlation}.

Puleo and Milenkovic \citeyearpar{PM16} introduced correlation clustering
with the goal of minimizing the $\ell_p$-norm of the disagreement vector.
They show that even for minimizing the $\ell_\infty$-norm on complete, unweighted graphs, the problem is \textsf{NP}-hard (Appendix C in \cite{PM16}).
Several groups found $O(1)$-approximation algorithms for minimizing the $\ell_p$-norm on complete, unweighted graphs \cite{PM16, CGS17, KMZ19}, the best of which is currently the 5-approximation of Kalhan, Makarychev, and Zhou \citeyearpar{KMZ19}.
Many other interesting objectives for correlation clustering focus on finding solutions that are (in some sense) fair or locally desirable~\cite{ahmadian2020fair, bateni2022scalable, ahmadi2020fair, friggstad2021fair, jafarov2021local, Khuller2019min}.
All of these previous works that study general $\ell_p$-norms or other notions of fairness or locality rely on solving a convex relaxation. 
This has two downsides: (1) the run-time of the algorithms are bottle-necked by the time it takes to solve the relaxation with at least $\Omega(n^2)$ many variables and $\Omega(n^3)$ constraints; in fact, it is time-consuming to even enumerate the $\Omega(n^2)$ variables and $\Omega(n^3)$ constraints; and (2) the solution is only guaranteed to be good for one particular value of $p$. 

 Several problems have been studied with the goal of finding a solution 
that is a good approximation for several objectives simultaneously.
The all-norms objective was introduced by Azar et al. \cite{azar2004all}, where the goal is to design a $\rho$-approximation algorithm for all $\ell_p$-norm objectives of a problem. 
They originally introduced the objective for the restricted assignment load balancing problem 
and showed an all-norms 2-approximation.
Further follow-up on the all-norms objective has been done for load balancing \cite{kleinberg1999fairness, bernstein2017simultaneously, langley2020improved}, and for set cover \cite{golovin2008all}.
The term ``universal" algorithm has also been used for Steiner tree \cite{busch2012split, busch2023one}, TSP \cite{jia2005universal}, and clustering \cite{ganesh2023universal}, 
though in these settings the goal is different, namely, to find a solution that is good for any potential \textit{input}; e.g., in Universal Steiner Tree, the goal is to find a spanning tree where for any set of terminals, the sub-tree connecting the root to the terminals is a good approximation of the optimal.

\section{Preliminaries}

We will introduce notation, 
and then we will discuss two relevant works---the papers by 
Kalhan, Makarychev, and Zhou \citeyearpar{KMZ19} and Davies, Moseley, and Newman \citeyearpar{DMN23}.

\subsection{Notation} \label{sec: notation}
Recall our input to the correlation clustering problem is 
$G=(V,E)$, an unweighted, complete graph on $n$ vertices, 
and every edge is assigned a label of either positive $(+)$ or negative $(-)$.
Let the set of positive edges be denoted $E^+$ and the set of negative edges $E^-$.
Then, we can define the \emph{positive neighborhood} and \emph{negative neighborhood} of a vertex $u$ as $N_u^+ =\{v\in V \mid (u,v) \in E^+\}$ and $N_u^- =\{v\in V \mid (u,v) \in E^-\}$, respectively.
We further assume without loss of generality that every vertex has a positive self-loop to itself.

A \emph{clustering} $\mathcal{C}$ is a partition of $V$ into \emph{clusters} $C_1,\ldots,C_k$ (but recall that $k$ is \textit{not} pre-specified). 
Let $C(u)$ denote the cluster that vertex $u$ is in, i.e., if $\mathcal{C}$ has $k$ clusters, there exists exactly one $i \in [k]$ such that $C(u) = C_i$.
It is also helpful to consider the vertices in a different cluster than $u$, and so we let $\overline{C(u)} = V \setminus C(u)$ denote this.
We say that a positive edge $e=(u,v) \in E^+$ is a \emph{disagreement} with respect to $\mathcal{C}$ if $v \in \overline{C(u)}$. On the other hand, we say that a negative edge $e=(u,v) \in E^-$ is a disagreement with respect to $\mathcal{C}$ if $v \in C(u)$.
For a fixed clustering $\mathcal{C}$, we denote the \emph{disagreement vector} of $\mathcal{C}$ as $y_{\mathcal{C}} \in \mathbb{Z}_{\geq 0}^n$, where for $u \in V$, $y_{\mathcal{C}}(u)$ is the number of edges incident to $u$ that are disagreements with respect to $\mathcal{C}$. We omit the subscript throughout the proofs when a clustering is clear.

Throughout, we let $\textsf{OPT}$ be the optimal objective value, 
and the $\ell_p$-norm to which it corresponds will be clear from context.
The next fact follows from the definitions seen so far (recalling also the positive self-loops).

\begin{fact} \label{pairwise_intersections}
For any $u,v \in V$, possibly with $u=v$, $n = |N_u^+ \cap N_v^+| + |N_u^- \cap N_v^-| + |N_u^+ \cap N_v^-| + |N_u^- \cap N_v^+|.$
\end{fact}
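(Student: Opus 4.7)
The plan is to argue that for any fixed vertex $w$, exactly one of the four intersections contains $w$, so summing the sizes of the four intersections simply counts each vertex of $V$ once, giving $n$.

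First I would observe that for any vertex $x \in V$, the sets $N_x^+$ and $N_x^-$ partition $V$. Since $G$ is a complete graph and every edge is labeled either positive or negative, every other vertex $w \neq x$ lies in exactly one of $N_x^+$, $N_x^-$. The potentially tricky case is $w = x$ itself, but this is handled by the convention (stated just above the fact in Section~\ref{sec: notation}) that every vertex has a positive self-loop, so $x \in N_x^+$ and $x \notin N_x^-$. Thus $N_x^+ \sqcup N_x^- = V$ for every $x \in V$, including when $x = u$ or $x = v$, and including when $u = v$.

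Next I would apply this partition simultaneously to $u$ and to $v$. For any $w \in V$, $w$ belongs to exactly one of $\{N_u^+, N_u^-\}$ and exactly one of $\{N_v^+, N_v^-\}$, so it belongs to exactly one of the four intersections
\[
N_u^+ \cap N_v^+,\quad N_u^- \cap N_v^-,\quad N_u^+ \cap N_v^-,\quad N_u^- \cap N_v^+.
\]
Equivalently, these four intersections partition $V$, so the sum of their sizes equals $|V| = n$, which is the claimed identity. The case $u = v$ causes no issue: the mixed intersections $N_u^+ \cap N_u^-$ are empty, and the remaining two terms reduce to $|N_u^+| + |N_u^-| = n$.

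There is no real obstacle here; the only subtlety worth flagging in the write-up is the role of the positive self-loop assumption, without which the identity could fail by $1$ or $2$ depending on whether $u$ and $v$ are distinct, since otherwise $u$ and $v$ might not lie in any of $N_u^\pm$ or $N_v^\pm$. I would therefore keep the proof to two or three lines and make the self-loop convention explicit.
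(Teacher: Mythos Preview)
Your proposal is correct and matches the paper's approach exactly: the paper does not give a separate proof but simply notes that the fact ``follows from the definitions seen so far (recalling also the positive self-loops),'' which is precisely the partition argument you spell out. Your explicit remark about the self-loop convention being what makes the identity exact is the one point the paper also flags.
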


\subsection{Summary of work by Kalhan, Makarychev, and Zhou}
The standard linear program relaxation for correlation clustering 
is stated in LP \ref{KMZ_LP}.\footnote{Technically this is a convex program as the objective is convex. For simplicity we will refer to it as an LP as the constraints are linear.}
In the \textit{integer} LP, the variable $x_{uv}$ indicates whether vertices $u$ and $v$ will be in the same cluster (0 for yes, 1 for no), and the disagreement vector is $y$; the optimal solution to the integer LP has value \textsf{OPT}, while the optimal solution to the relaxation gives a lower bound on $\textsf{OPT}$. 
Note the triangle inequality is enforced on all triples of vertices, inducing a semi-metric space on $V$.
Throughout this paper, as in \cite{DMN23}, we refer to the algorithm by Kalhan, Makarychev, and Zhou as the KMZ algorithm. 
The \emph{KMZ algorithm} has two phases: it solves LP \ref{KMZ_LP}, 
and then uses the \textit{KMZ rounding algorithm} (Algorithm \ref{KMZ-alg}) to obtain an integral assignment of vertices to clusters. 
See Appendix \ref{sec: lp_rounding_alg} for its formal statement.
At a high-level, the KMZ rounding algorithm is an iterative, ball-growing algorithm
that uses the semi-metric to guide its choices on forming clusters.
Their algorithm is a 5-approximation, and produces different clusterings for different $p$, since the optimal solution $x^*$ to LP \ref{KMZ_LP} depends on $p$.

\vspace{-1cm}
\begin{flushleft}
\begin{equation*}\label{KMZ_LP}
\end{equation*}
LP \ref{KMZ_LP}
\end{flushleft}
\vspace{-1cm}
\begin{align}
     &\min {\lVert y \rVert}_p \notag\\
    \textsf{s.t. } y_u &= \sum_{v \in N_u^+} x_{uv} + \sum_{v \in N_u^-} (1-x_{uv})  && \forall u \in V \notag\\
     x_{uv}&\leq x_{uw}+x_{vw}  &&\forall u,v,w \in V \notag\\
     0 &\leq x_{uv} \leq 1  &&\forall u,v \in V. \notag
\end{align}

\begin{definition} \label{def: frac_cost}
    Let $f$ be a semi-metric on $V$, i.e., taking $x=f$ gives a feasible solution to LP \ref{KMZ_LP}. The \emph{fractional cost of $f$ in the $\ell_p$-norm objective} is the value of LP \ref{KMZ_LP} that results from setting $x=f$. When $p$ is clear from context, we will simply call this the fractional cost of $f$. 
\end{definition}

\subsection{Summary of work by Davies, Moseley, and Newman}\label{sec: DMN-summary}

The main take-away from the work of Kalhan, Makarychev, and Zhou \citeyearpar{KMZ19} is that one only requires a semi-metric on the set of vertices,
whose cost is comparable to the cost of an optimal solution,
as input to the KMZ rounding algorithm (Algorithm \ref{KMZ-alg}). 
Thus, the insight of Davies, Moseley, and Newman \citeyearpar{DMN23} for the $\ell_{\infty}$-norm objective is that one can combinatorially construct such a semi-metric without solving an LP, and at small loss in the quality of the fractional solution.
They do this by introducing the \emph{correlation metric}.

\begin{definition}[\cite{DMN23}] \label{def: orig_corr_metric}
For all $u,v \in V$, the \textit{\dn}~defines the 
distance between $u$ and $v$ as
\begin{align}
 \label{eq: equiv_def}
d_{uv} &= 1-\frac{|N_u^+ \cap N_v^+|}{|N_u^+ \cup N_v^+|} \notag \\
&=  \frac{|N_u^+ \cap N_v^-| +|N_u^- \cap N_v^+| }{|N_u^+ \cap N_v^+| + |N_u^+ \cap N_v^-| + |N_u^- \cap N_v^+|}.
\end{align}
\end{definition}

Note that the rewrite in Line (\ref{eq: equiv_def}) is apparent from Fact \ref{pairwise_intersections}.

The correlation metric captures useful information  succinctly.
Intuitively, if $u$ and $v$ have relatively large positive intersection, i.e., $N_u^+ \cap N_v^+$ is large compared to their other relevant joint neighborhoods $\left (N_u^+ \cap N_v^-\right ) \cup \left ( N_u^- \cap N_v^+\right )$, then from the perspective of $u$ and $v$,
fewer disagreements are incurred by putting
$u$ and $v$ in the same cluster than by putting them in different clusters.
This is because if $u$ and $v$ are in the same cluster, then they have disagreements on edges $(u,w)$ and $(v,w)$ for $w \in \left (N_u^+ \cap N_v^- \right )\cup \left (N_u^- \cap N_v^+\right )$, but if they are in different clusters, then $u$ and $v$ have disagreements on edges $(u,w)$ and $(v,w)$ for  $w \in N_u^+ \cap N_v^+$. 
Note that the metric is not normalized by $n$ (the size of the joint neighborhood), but instead by $n-| N_u^- \cap N_v^- | = |N_u^+ \cap N_v^+| + |N_u^+ \cap N_v^-| + |N_u^- \cap N_v^+|$.
In addition to the above intuition on the normalization factor, we also observe that $w \in N_u^- \cap N_v^- $ do not necessarily force disagreements on $(u,w)$ and $(v,w)$, since $w$ can go in a different cluster than both $u$ and $v$ without penalty.
For more on intuition behind the correlation metric, see Section 2 in \cite{DMN23}.

Davies, Moseley, and Newman \citeyearpar{DMN23} prove that the correlation metric $d$ can be used as input to the KMZ rounding algorithm by showing that (1) $d$ satisfies the triangle inequality and (2) the fractional cost of $d$ in the $\ell_{\infty}$-norm (recall Definition \ref{def: frac_cost}) is no more than 8 times the value of the optimal integral solution ($\textsf{OPT}$). Since the KMZ rounding algorithm (Algorithm \ref{KMZ-alg}) loses a factor of at most 5, inputting $d$ to that algorithm returns a 40-approximation algorithm. 
A benefit of the correlation metric is that it can be computed in time $O(n^{\omega})$, and even faster when the subgraph on positive edges is sparse.

\subsection{Technical overview}
 It is not hard to see that the correlation metric cannot be used as input to the KMZ algorithm for $\ell_p$-norms other than $p = \infty$, as one cannot bound the fractional cost of the correlation metric against the optimal with only an $O(1)$-factor loss. To see why, consider the star again, as in Figure \ref{fig: star}. Here, for all $u,v \in \{v_1,\ldots,v_{n-1}\}$, $d_{uv} = 1- 1/(n-(n-3)) = 2/3$, but for the $\ell_1$-norm, we need the semi-metric to have the value $1-d_{uv}$ be close to 0, i.e. $O(1/n)$, for such $u,v$, in order for the fractional cost to be comparable to the value of $\textsf{OPT}$ for $p=1$.

There are several possible fixes one could try to make to the correlation metric. 
One idea is that since one can interpret the correlation metric as a coarse approximation of the probability the Pivot algorithm separates $u$ and $v$\footnote{Pivot operates as follows: Choose a random unclustered $u \in V$. Take $u$ and all its unclustered positive neighbors and let this be the newest cluster. Continue until all vertices in $V$ are clustered.}, 
one could try to adapt the correlation metric to more accurately approximate this probability.\footnote{If in LP \ref{KMZ_LP}, $x_{uv}$ is set exactly to the probability that $u$ and $v$ are separated by Pivot, then $x$ will be a feasible solution with cost at most $\textsf{3OPT}$. However, this probability seems difficult to express in closed form, even approximately.}
Another idea, inspired by an observation below, is that one could define a semi-metric for edges in $E^+$ and another semi-metric for edges in $E^-$, but then there is the difficulty of showing the triangle inequality holds when positive and negative edges are mixed together. 
Both of these ideas were, for us, unsuccessful.

Instead, the following two observations of how the correlation metric works with respect to the $\ell_1$-norm led us to an effective adaptation: 
\begin{enumerate}
\item One can bound the fractional cost, \textit{restricted to positive edges}, of the correlation metric in the $\ell_1$-norm  by an $O(1)$-factor times the optimal solution's cost (see Claim \ref{claim: l1-pos-cm} in Appendix \ref{sec:l1-norm}). Negative edges still pose a challenge.
\item If the subgraph of positive edges is regular, then we show one can bound the fractional cost of the correlation metric in the $\ell_1$-norm on negative edges as well as positive.\footnote{In contrast, one cannot bound the fractional cost of the correlation metric on the (irregular) star (Figure \ref{fig: star}).} See Section \ref{sec: regular_warmup}. 
\end{enumerate}

These observations led us to ask whether some adjustments to the correlation metric might yield a semi-metric with bounded fractional cost in the $\ell_1$-norm or even the $\ell_p$-norm more generally (while still remaining bounded in the $\ell_{\infty}$-norm). Moreover, since the KMZ rounding algorithm does not depend on $p$ (whereas in the KMZ algorithm, the solution to the LP \textit{does} depend on $p$), inputting the same semi-metric to the rounding algorithm produces the same clustering for all $\ell_p$-norms! 

We are ready to define the \emph{adjusted correlation metric}. 
Let $\Delta_u$ denote the positive degree of $u$ (the degree of $u$ in the subgraph of positive edges).
\begin{definition} \label{def: adj_corr_metric}
    Define the \emph{adjusted correlation metric} $f: E \rightarrow [0,1]$ as follows:
    \begin{enumerate}
        \item For $d$ the correlation metric, i.e., 
        $d_{uv} = 1 - \frac{|N_u^+ \cap N_v^+|}{|N_u^+ \cup N_v^+|}$, initially set $f = d$. 
        \item If $e \in E^-$ and $d_e > 0.7$, set $f_e = 1$ (round up). 
        \item For $u \in V$ such that $|N_u^- \cap \{v: d_{uv} \leq 0.7\}| \geq \frac{10}{3}\Delta_u $, set $f_{uv}=1$ for all $v \in V \setminus \{u\}$.
    \end{enumerate}
\end{definition}

The idea in Step 3 is that if the fractional cost of negative edges incident to $u$ is sufficiently large, we instead trade this for the cost of positive disagreements, as 
the rounding algorithm will now put $u$ in its own cluster.
For the $\ell_\infty$-norm, this trade-off is innocuous.
For $\ell_p$-norms in general, a refined charging argument is needed to show that post-processing $d$ in this way sufficiently curbs the (too large) fractional cost of $d$.

In Section \ref{sec: regular_warmup}, we start with a warm-up exercise 
and show that if the graph on positive edges is regular, then 
the (original) correlation metric $d$ has $O(1)$-approximate fractional cost.
Note this section is not necessary to understanding the rest of the paper, 
but we include it in the main body because we find the proof here very clever!
The main technical result of the paper is Section \ref{sec: proofs}, where we prove Theorem \ref{thm: main-acm} 
by showing that the adjusted correlation metric can be input to the KMZ rounding algorithm.
Namely, we will first show (quite easily) that the adjusted correlation metric satisfies an approximate triangle inequality.
Then, it remains to upper bound the fractional cost of the adjusted correlation metric against $\textsf{OPT}$.
We tackle this with a combinatorial charging argument.   This argument leverages a somewhat different approach from that used in \cite{DMN23} and is simpler than their proof for only the $\ell_\infty$-norm.
The \textit{constant} approximation factor obtained from inputting the adjusted correlation metric to Algorithm \ref{KMZ-alg} is bounded above (and below) by universal constants for all $p$ (this is the worst case and one can get better constants for each $p$).
While we keep the argument for general $\ell_p$-norms in the main body of the paper,
the interested reader may find a simplified proof for the $\ell_1$-norm objective in Appendix \ref{sec:l1-norm}.

\section{A Special Case: Regular Graphs} \label{sec: regular_warmup}
In general, the original correlation metric $d$ (Definition \ref{def: orig_corr_metric}) does not necessarily have bounded fractional cost for the $\ell_1$-norm objective (or more generally for $\ell_p$-norm objectives). So, we use the adjusted correlation metric $f$ (Definition \ref{def: adj_corr_metric}) as input to the KMZ rounding algorithm (Appendix \ref{sec: lp_rounding_alg}). In this section, we show that \textit{if the subgraph of positive edges is regular}, then the correlation metric $d$ can be used as is (i.e., without the adjustments in Steps 2 and 3 of Definition \ref{def: adj_corr_metric}) to yield a clustering that is constant approximate for the $\ell_1$-norm and $\ell_{\infty}$-norm simultaneously:

\begin{theorem}
    Let $G=(V,E)$ be an instance of unweighted, complete correlation clustering, and let $E^+$ denote the set of positive edges. Suppose that the subgraph induced by $E^+$ is regular. Then the fractional cost of $d$ in the $\ell_1$-norm objective is within a constant factor of $\textsf{OPT}$:

    \[\sum_{u \in V} \sum_{v \in N_u^+} d_{uv} + \sum_{u \in V} \sum_{v \in N_u^-} (1-d_{uv}) = O(\textsf{OPT}).\]
    Therefore, the clustering produced by inputting $d$ to the KMZ rounding algorithm is a constant-factor approximation simultaneously for the $\ell_1$-norm and $\ell_{\infty}$-norm objectives. 
\end{theorem}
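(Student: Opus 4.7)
I would split the fractional $\ell_1$-cost of $d$ into its positive-edge and negative-edge contributions and bound each separately. For the positive part $\sum_u \sum_{v \in N_u^+} d_{uv}$, I would invoke Claim \ref{claim: l1-pos-cm} from Appendix \ref{sec:l1-norm}, which bounds this sum by $O(\textsf{OPT})$ on \emph{any} graph---no regularity needed. So all the real work in the regular case concerns the negative part $\sum_u \sum_{v \in N_u^-}(1-d_{uv})$.

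For the negative part, regularity enters immediately. Since $|N_u^+| = |N_v^+| = \Delta$ for every pair, $|N_u^+ \cup N_v^+| \geq \Delta$, and so
\[
1 - d_{uv} \;=\; \frac{|N_u^+ \cap N_v^+|}{|N_u^+ \cup N_v^+|} \;\leq\; \frac{|N_u^+ \cap N_v^+|}{\Delta}.
\]
Summing over negative edges and reindexing the double sum by a common positive apex $w$ gives
\[
\sum_{(u,v) \in E^-}(1 - d_{uv}) \;\leq\; \frac{1}{\Delta}\sum_{w \in V} \bigl|\{(u,v) \in E^- : u, v \in N_w^+\}\bigr| \;=\; \frac{B}{\Delta},
\]
where $B$ is the number of \emph{bad triangles}---triples $\{u,v,w\}$ with $(u,w), (v,w) \in E^+$ and $(u,v) \in E^-$. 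Each bad triangle has a unique positive apex $w$, so it is counted exactly once.

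The crux is to show $B = O(\Delta \cdot \textsf{OPT})$ by a charging argument against an optimal $\ell_1$-clustering $\mathcal{C}^*$. Every bad triangle is inconsistent and must contain at least one disagreement under $\mathcal{C}^*$, so $B$ is at most the number of (disagreement edge, bad triangle containing it) incidences. A positive disagreement $(u,w) \in E^+$ participates in a bad triangle only through a third vertex $v$ that is a positive neighbor of exactly one of $u, w$, yielding at most $2(\Delta - 1)$ choices. A negative disagreement $(u,v) \in E^-$ participates only through an apex $w \in N_u^+ \cap N_v^+$, yielding at most $\Delta$ choices. Hence each disagreement is charged at most $2\Delta$ times, so $B \leq 2\Delta \cdot D^*$, where $D^*$ is the number of disagreements under $\mathcal{C}^*$. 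Since $\textsf{OPT} = \|y^*\|_1 = 2 D^*$, this yields $B \leq \Delta \cdot \textsf{OPT}$, hence the negative-edge contribution is at most $2\,\textsf{OPT}$.

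Combining the two bounds proves the displayed inequality. The simultaneous $\ell_\infty$ guarantee is then immediate: DMN have already shown that the fractional $\ell_\infty$-cost of $d$ is at most $8\,\textsf{OPT}_{\ell_\infty}$, and the KMZ rounding algorithm is a $5$-approximation of the fractional cost in any $\ell_p$-norm (Section \ref{sec: DMN-summary}). Since the same semi-metric $d$ is fed to the rounding algorithm, the single clustering returned is simultaneously $O(1)$-approximate for $\ell_1$ and $\ell_\infty$. The main obstacle I anticipate is the charging step bounding $B$: it needs a careful separation of positive and negative disagreements, and is precisely the place where regularity is essential, since each endpoint of a disagreement may contribute at most $\Delta$ candidate third vertices to a bad triangle.
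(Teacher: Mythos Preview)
Your proof is correct, but it takes a somewhat different route from the paper's. The paper gives a \emph{dual fitting} argument: it writes down the standard triangle-covering LP (minimize $\sum_e x_e$ subject to $x_{ij}+x_{jk}+x_{ki}\ge 1$ for every bad triangle) and its dual, sets the dual variable $y_T = 1/(2\Delta)$ for every bad triangle $T$, and checks feasibility---which is precisely the statement that each edge lies in at most $2\Delta$ bad triangles. Then both the positive and negative edge costs are bounded uniformly via regularity (using $|N_u^+ \cup N_v^+| \ge \Delta$ on both sides), and weak duality yields $6\cdot \textsf{OPT}_P \le 6\cdot\textsf{OPT}$.

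Your argument is mathematically equivalent on the negative side: your charging bound ``each disagreement edge is in at most $2\Delta$ bad triangles, hence $B \le 2\Delta\cdot D^* = \Delta\cdot\textsf{OPT}$'' is exactly the dual feasibility check plus weak duality, unpacked. The genuine differences are (i) you treat positive edges by invoking the general bound from Claim~\ref{claim: l1-pos-cm} (which needs no regularity), whereas the paper uses regularity symmetrically on both edge types; and (ii) you avoid introducing the LP altogether. The paper's packaging buys a slightly stronger intermediate conclusion (bounding against the LP optimum $\textsf{OPT}_P$ rather than $\textsf{OPT}$ directly) and a cleaner uniform treatment of the two edge types; your packaging is more elementary and self-contained, and makes explicit that regularity is only needed for the negative edges.
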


While this proof is quite different than the one for arbitrary-degree graphs,
we include it because we find it beautiful,
and it gives useful intuition for where the correlation metric breaks down for irregular graphs.

\begin{proof}
    Let $\Delta$ be the (common) degree of the positive subgraph. To show that the fractional cost of $d$ in the $\ell_1$-norm objective is $O(\textsf{OPT})$ for regular graphs, we will use a dual fitting argument. The LP relaxation we consider is from \cite{ACN-pivot}, which uses a dual fitting argument to show constant approximation guarantees for Pivot (although the proof here does not otherwise resemble the proof for Pivot). The primal is given by 

\begin{equation} \label{eq: lp_pivot} \tag{$P$}
 \min\Big\{\sum_{e \in E} x_e \mid x_{ij}+x_{jk}+x_{ki} \geq 1, \forall ijk \in \mathcal{T}, x \geq 0\Big\}
\end{equation}
where $\mathcal{T}$ is the set of bad triangles (i.e. triangles with exactly two positive edges and one negative edge). For $x \in \{0,1\}^{|E|}$, $x$ corresponds to disagreements in a clustering: we set $x_e = 1$ if $e$ is a disagreement and $x_e = 0$ otherwise. The constraints state that every clustering must make a disagreement on every bad triangle. Thus, ($P$) is a relaxation for the $\ell_1$-norm objective. In fact, we will prove the stronger statement that the fractional cost is $O(\textsf{OPT}_{P})$, where $\textsf{OPT}_{P}$ is the optimal objective value of ($P$). 

The dual is given by 
\begin{equation} \label{eq: dual_pivot} \tag{$D$}
\max\Big\{\sum_{T \in \mathcal{T}}y_T \mid \sum_{T \in \mathcal{T}: T \ni e} y_T \leq 1, \forall e \in E, y \geq 0 \Big\}.
\end{equation}

We show that by setting $y_T = \frac{1}{2\Delta}$ for all $T \in \mathcal{T}$, $y$ satisfies the following properties:
\begin{enumerate}
    \item $y$ is feasible in ($D$).
    \item The fractional cost of $d$ is at most $6 \cdot \sum_{T \in \mathcal{T}} y_T$.
\end{enumerate}

Letting $\textsf{OPT}_D$ be the optimal objective value of ($D$), we have $6 \cdot \sum_{T \in \mathcal{T}} y_T \leq 6 \cdot \textsf{OPT}_D = 6 \cdot \textsf{OPT}_{P} \leq 6 \cdot \textsf{OPT}$, which will conclude the proof.

To prove feasibility, we case on whether $e$ is positive or negative. 

If $e \in E^-$, then $|\{T \in \mathcal{T}: T \ni e\}| = |N_u^+ \cap N_v^+| \leq \Delta$, where equality is by the definition of a bad triangle. So 
\[\sum_{T \in \mathcal{T}: T \ni e} y_T  \leq \frac{\Delta}{2\Delta} \leq 1.\]

If $e \in E^+$, then $|\{T : T \ni e\}| = |(N_u^+ \cap N_v^-) \cup (N_u^- \cap N_v^+)| \leq 2\Delta$. We conclude $y$ is feasible, since
\[\sum_{T \in \mathcal{T}: T \ni e} y_T  \leq \frac{2\Delta}{2\Delta} = 1.\]

Now we need to show that the fractional cost of $d$ is bounded in terms of the objective value of $(D)$. First we bound the fractional cost of the negative edges:
\[\sum_{(u,v) \in E^-}(1-d_{uv}) \leq \sum_{(u,v)\in E^-} |N_u^+ \cap N_v^+|/\Delta= \sum_{e \in E^-} \sum_{T \in \mathcal{T}: T \ni e} 1/\Delta = \sum_{e \in E^-} \sum_{T \in \mathcal{T}: T \ni e} 2y_T,\]
where in the first inequality we have used that $|N_u^+ \cup N_v^+| \geq \Delta$. Next we bound the fractional cost of the positive edges:
\[\sum_{(u,v) \in E^+}d_{uv} \leq \sum_{(u,v)\in E^+} (|N_u^+ \cap N_v^-| + |N_u^- \cap N_v^+|)/\Delta = \sum_{e \in E^+} \sum_{T \in \mathcal{T}: T \ni e} 1/\Delta = \sum_{e \in E^+} \sum_{T \in \mathcal{T}: T \ni e} 2y_T. \]

So the total fractional cost is bounded by 
$\sum_{e \in E} \sum_{T \in \mathcal{T}: T \ni e} 2 y_T = 6 \cdot \sum_{T \in \mathcal{T}} y_T,$
since each triangle contains three edges. This is what we sought to show. Since the fractional cost of $d$ is bounded for the $\ell_1$-norm objective (and the $\ell_{\infty}$-norm objective by \cite{DMN23}), using $d$ as input to KMZ rounding algorithm produces a clustering that is simultaneously $O(1)$-approximate for the $\ell_1$- and $\ell_{\infty}$-norm objectives.  
\end{proof}

\section{Proof of Theorem \ref{thm: main-acm}}\label{sec: proofs}

The goal of this section is to prove Theorem \ref{thm: main-acm} and the subsequent Corollary 
\ref{cor: main-acm-sparse}.
We begin by outlining that the adjusted correlation metric satisfies an approximate triangle inequality in Subsection \ref{sec: tri-ineq}. 
Then in Subsection \ref{sec: lp-norm}, we prove the fractional cost of the adjusted correlation metric in any $\ell_p$-norm objective is an $O(1)$ factor away from the optimal solution's value.
We tie it all together to prove Theorem \ref{thm: main-acm} and Corollary 
\ref{cor: main-acm-sparse} in Subsection \ref{sec: thm1_proof}.

Before all that, we prove a proposition that will be key in several settings. Loosely, it states that if two vertices are close to each other according to $d$, then they have a large shared positive neighborhood. 
 \begin{proposition} \label{prop: pos_overlap}
     Fix vertices $u,v \in V$ and a clustering $\mathcal{C}$ on $V$ such that $d_{uv} \leq 0.7$ and $|N_u^+ \cap C(u)|\hspace{1mm}/ \hspace{1mm} |N_u^+|\geq 0.85$.
     Then $|N_u^+ \cap N_v^+ \cap C(u)| \geq 0.15 \cdot |N_u^+|$.
 \end{proposition}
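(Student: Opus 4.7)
The plan is a short inclusion--exclusion argument using the two hypotheses in sequence. First, unpack the hypothesis $d_{uv} \leq 0.7$ via the definition of the correlation metric. Since $d_{uv} = 1 - \frac{|N_u^+ \cap N_v^+|}{|N_u^+ \cup N_v^+|}$, the bound $d_{uv} \leq 0.7$ rearranges to $|N_u^+ \cap N_v^+| \geq 0.3 \cdot |N_u^+ \cup N_v^+|$. Bounding $|N_u^+ \cup N_v^+| \geq |N_u^+|$ then gives the clean lower bound $|N_u^+ \cap N_v^+| \geq 0.3 \cdot |N_u^+|$. This is the only place the correlation metric hypothesis is used.

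Next, I would use the clustering hypothesis to control how much of $N_u^+$ can lie outside $C(u)$. By assumption $|N_u^+ \cap C(u)| \geq 0.85 \cdot |N_u^+|$, which immediately gives $|N_u^+ \setminus C(u)| \leq 0.15 \cdot |N_u^+|$.

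Finally, I would combine the two via the trivial set identity $|N_u^+ \cap N_v^+| = |N_u^+ \cap N_v^+ \cap C(u)| + |N_u^+ \cap N_v^+ \setminus C(u)|$, and the containment $N_u^+ \cap N_v^+ \setminus C(u) \subseteq N_u^+ \setminus C(u)$, yielding
\[
|N_u^+ \cap N_v^+ \cap C(u)| \;\geq\; |N_u^+ \cap N_v^+| - |N_u^+ \setminus C(u)| \;\geq\; 0.3\,|N_u^+| - 0.15\,|N_u^+| \;=\; 0.15\,|N_u^+|.
\]
There is no genuine obstacle here: the two constants $0.7$ and $0.85$ are chosen precisely so that the surplus $0.3 - 0.15$ is positive, and the argument is entirely a one-line cardinality computation once the definitions are expanded. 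The only thing to be careful about is the direction of the bound $|N_u^+ \cup N_v^+| \geq |N_u^+|$, which is why the proposition is stated asymmetrically in $u$ and $v$ (the hypothesis on $C(u)$ matches up with $N_u^+$ on the right-hand side).
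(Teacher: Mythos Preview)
Your proposal is correct and matches the paper's proof essentially line for line: both extract $|N_u^+ \cap N_v^+| \geq 0.3\,|N_u^+|$ from $d_{uv}\leq 0.7$ via $|N_u^+ \cup N_v^+|\geq |N_u^+|$, and then combine with $|N_u^+ \cap C(u)| \geq 0.85\,|N_u^+|$ by inclusion--exclusion inside $N_u^+$ (the paper writes it as $0.85+0.3-1$, you write it as $0.3-0.15$).
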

\begin{proof}
    Since $d_{uv} \leq 0.7$, 
    $\frac{|N_u^+ \cap N_v^+|}{|N_u^+|} \geq \frac{|N_u^+ \cap N_v^+|}{|N_u^+ \cup N_v^+|} \geq 0.3.$
    Using the assumption on $u$ together with the above inequality implies that $|N_v^+ \cap N_u^+ \cap C(u)|\hspace{1mm}/ \hspace{1mm}|N_u^+| \geq 0.85 + 0.3 - 1 = 0.15.$
\end{proof}

\subsection{Triangle inequality}\label{sec: tri-ineq}
Recall that the correlation metric $d$ satisfies the triangle inequality (see Section 4.2 in \cite{DMN23}). We will show that the adjusted correlation metric $f$ satisfies an approximate triangle inequality, which is sufficient for the KMZ rounding algorithm. 
Formally, we say that a function $g$ is a \emph{$\delta$-semi-metric} on some set $S$ if it is a semi-metric on $S$, except instead of satisfying the triangle inequality, $g$ satisfies 
$g(u,v) \leq \delta \cdot( g(u,w) + g(v,w))$ for all $u,v,w \in S$.

\begin{lemma}[Triangle Inequality] \label{lem: adjusted_tri_ineq}
    The adjusted correlation metric $f$ is a $\frac{10}{7}$-semi-metric. 
\end{lemma}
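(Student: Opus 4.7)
The plan is to argue case-by-case based on why each coordinate of $f$ has its particular value, leaning on two simple facts: (i) $f \geq d$ pointwise (the adjustments in Steps 2 and 3 only round values up), and (ii) $d$ itself satisfies the exact triangle inequality $d_{uv} \leq d_{uw} + d_{wv}$, as shown in \cite{DMN23}. The constant $\tfrac{10}{7}$ should arise exactly because the Step 2 threshold is $0.7 = 7/10$, so that any ``rounded-up'' edge value of $1$ can still be charged to a lower bound of $0.7$ on $d$.

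Fix arbitrary $u,v,w \in V$; I need $f_{uv} \leq \tfrac{10}{7}(f_{uw}+f_{vw})$. I would split on the status of the edge $(u,v)$ in the definition of $f$.

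\textbf{Case 1: $f_{uv} = d_{uv}$ (no adjustment).} Then by the triangle inequality for $d$ and the fact that $f\geq d$,
\[
f_{uv} = d_{uv} \leq d_{uw} + d_{vw} \leq f_{uw} + f_{vw} \leq \tfrac{10}{7}(f_{uw}+f_{vw}).
\]

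\textbf{Case 2: $f_{uv}=1$ by Step 2,} i.e.\ $(u,v)\in E^-$ with $d_{uv}>0.7$. Again using the triangle inequality for $d$ and $f\geq d$,
\[
f_{uw} + f_{vw} \geq d_{uw} + d_{vw} \geq d_{uv} > 0.7,
\]
so $\tfrac{10}{7}(f_{uw}+f_{vw}) > 1 = f_{uv}$.

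\textbf{Case 3: $f_{uv}=1$ by Step 3,} triggered at some vertex $x \in \{u,v\}$ with $|N_x^- \cap \{z : d_{xz}\leq 0.7\}| \geq \tfrac{10}{3}\Delta_x$. Without loss of generality take $x=u$. By definition of Step 3, $f_{uz}=1$ for every $z \in V\setminus\{u\}$; in particular $f_{uw}=1$, and hence $\tfrac{10}{7}(f_{uw}+f_{vw}) \geq \tfrac{10}{7} \geq 1 = f_{uv}$.

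Since these cases exhaust the possibilities for the value $f_{uv}$, the inequality holds in all cases, establishing that $f$ is a $\tfrac{10}{7}$-semi-metric. The only subtle step is Case 2, and it is subtle only in the sense of matching the threshold $0.7$ in Step 2 to the constant $\tfrac{10}{7}$; everything else is a direct consequence of $f \geq d$ together with the exact triangle inequality for $d$. There is no real obstacle here — the definition of $f$ was evidently tuned so that this proof goes through cleanly — and I would expect the main work of the paper to lie instead in the fractional cost bound handled in Subsection \ref{sec: lp-norm}.
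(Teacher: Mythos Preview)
Your proof is correct and takes essentially the same approach as the paper: both rely on $f \geq d$ pointwise, the exact triangle inequality for $d$, the threshold $0.7$ in Step 2 yielding the factor $\tfrac{10}{7}$, and the observation that Step 3 raises all edges out of the triggered vertex simultaneously (so another side of the triangle is also $1$). The only cosmetic difference is organizational---the paper argues the approximate triangle inequality is preserved after each step of Definition~\ref{def: adj_corr_metric}, whereas you case directly on the final value of $f_{uv}$---but the content is the same.
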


The proof of Lemma \ref{lem: adjusted_tri_ineq} is straightforward given that $d$ satisfies the triangle inequality; see Appendix \ref{sec: tri_ineq}.

Lemma 3 in \cite{DMN23} proves that one can input a semi-metric that satisfies an approximate triangle inequality (instead of the exact triangle inequality) to the KMZ rounding algorithm (with some loss in the approximation factor). We summarize the main take-away below.

\begin{lemma}[\cite{DMN23}]\label{lem: apx-tri}
    If $g$ is a $\delta$-semi-metric on the set $V$, instead of a true  semi-metric (i.e., $1$-semi-metric), then the KMZ algorithm loses a factor of $1+\delta+\delta^2+\delta^3+\delta^4$.\footnote{When $\delta =1$, this factor equals 5, which is the loss in the KMZ algorithm.}
\end{lemma}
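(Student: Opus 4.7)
The plan is to revisit the proof of the KMZ rounding algorithm's $5$-approximation guarantee and observe that the value $5$ in that proof arises as a sum $1+1+1+1+1$ of contributions, each of which is produced by chaining the triangle inequality a specific number of times. Substituting the $\delta$-inflated inequality $g(u,v) \le \delta(g(u,w) + g(v,w))$ for the exact triangle inequality then converts each contribution of $1$ into a contribution of $\delta^{k}$, where $k$ is the number of times the triangle inequality is invoked in that step, yielding the claimed bound $1+\delta+\delta^{2}+\delta^{3}+\delta^{4}$.

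Concretely, I would first reproduce the KMZ analysis at a high level. At each iteration, a pivot $u$ is chosen and a cluster $C_u$ is formed by collecting vertices within a threshold distance (under the input semi-metric $g$) from $u$; the rounding cost at $u$ is the number of disagreements on edges with at least one endpoint in $C_u$, and the standard analysis charges this cost to the fractional cost of edges near $u$. The crucial step in each charging argument is bounding the $g$-distance of a disagreement edge $(v,w)$ by distances through the pivot, which is exactly where the triangle inequality is used. Different edge types (positive vs.\ negative) and different configurations (both endpoints in $C_u$, endpoint crossing in or out, non-pivot endpoints whose fractional cost is charged to the pivot's witnesses, etc.) require different-length chains of triangle inequality applications; direct charges contribute a factor of $1$, a single hop through the pivot contributes one factor of the triangle constant, and so on up to chains of length four for the most indirect charges.

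Next, I would carry out the charging case-by-case while tracking the chain length, showing that when $g$ satisfies only the $\delta$-inflated inequality each hop in a chain of length $k$ multiplies the bound by $\delta$, producing a $\delta^{k}$ contribution in that case. Summing over the five cases (with chain lengths $0,1,2,3,4$) gives $1+\delta+\delta^{2}+\delta^{3}+\delta^{4}$. As a sanity check, setting $\delta=1$ recovers the original factor of $5$, which both confirms the decomposition and matches the footnote attached to Lemma \ref{lem: apx-tri}. The argument is valid for any $\ell_p$-norm because the triangle-inequality scaling enters multiplicatively into each per-edge or per-vertex bound and commutes with the outer $\ell_p$ aggregation over vertices.

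The main obstacle I expect is bookkeeping rather than conceptual: I must verify that the KMZ proof really does decompose as five separate contributions with chain lengths $0,1,2,3,4$, rather than, say, four contributions whose chain lengths sum differently but happen to total $5$ when $\delta=1$. A related subtlety is that if a single charging step uses two simultaneous (not nested) triangle inequalities on different edges, those should each contribute only $\delta$ to their own term rather than compounding into $\delta^{2}$; so I would be careful to distinguish sequential chains from parallel applications, and ensure that the decomposition I identify actually matches the geometric sequence in the claimed factor.
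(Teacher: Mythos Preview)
The paper does not actually prove this lemma: it is quoted verbatim from \cite{DMN23} (``Lemma 3 in \cite{DMN23} proves that\ldots We summarize the main take-away below''), so there is no in-paper proof to compare against. Your plan---rerunning the KMZ per-vertex analysis and tracking how many nested applications of the triangle inequality each charging case requires, so that the $k$th case contributes $\delta^{k}$ instead of $1$---is exactly the mechanism behind the cited result, and the sanity check $1+\delta+\delta^{2}+\delta^{3}+\delta^{4}\big|_{\delta=1}=5$ is the right one.

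The only caution is the one you already flag: your write-up is a plan, not a proof, and the whole content lives in the bookkeeping you defer. In the actual KMZ argument the five-way split is not literally ``five edge types, one chain each''; rather, the factor arises from comparing $\textsf{ALG}(u)$ to $y(u)$ via the radii $r=1/5$ and $2r$, and the chain lengths $0,1,2,3,4$ emerge from bounding distances like $g(v,w)$ by hopping through the current pivot and (in the hardest case) through a previous pivot as well. If you intend to supply a self-contained proof rather than a citation, you will need to reproduce those cases explicitly and verify that the longest chain is indeed length four (and not, say, two chains of length two in parallel, which as you note would give $2\delta^{2}$ rather than $\delta^{3}+\delta^{4}$). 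Absent that verification, what you have is a correct heuristic for why the formula should hold, matching the paper's treatment of simply citing \cite{DMN23}.
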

Since we show in Lemma \ref{lem: adjusted_tri_ineq} that $f$ is a $\frac{10}{7}$-semi-metric,
we lose a factor of 12 in inputting $f$ to the KMZ algorithm (along with the factor loss from the fractional cost).

\subsection{Bounding the fractional cost of $\ell_p$-norms}\label{sec: lp-norm}
This section bounds the fractional cost of the adjusted correlation metric for the $\ell_p$-norms.  The following lemma considers the case where $p=\infty$.  The general case is handled in the subsequent lemma.

\begin{lemma} \label{lem: infty_proof}
    The fractional cost of the adjusted correlation metric $f$ in the $\ell_{\infty}$-norm objective is at most $56 \cdot \textsf{OPT}$, where $\textsf{OPT}$ is the cost of the optimal integral solution in the $\ell_{\infty}$-norm.
\end{lemma}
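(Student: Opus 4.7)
The plan is to leverage two prior ingredients: the DMN23 bound that the $\ell_\infty$-fractional cost of the original correlation metric $d$ is at most $8\textsf{OPT}$ (i.e., $\sum_{w \in N_v^+} d_{vw} + \sum_{w \in N_v^-}(1 - d_{vw}) \leq 8\textsf{OPT}$ for every $v \in V$), and Proposition \ref{prop: pos_overlap}, which says two vertices close in $d$ share a large positive neighborhood inside any nearly-filled cluster. I would then account for the cost change induced by Steps 2 and 3 of Definition \ref{def: adj_corr_metric}. Step 2 only rounds \emph{negative} edges upward, so it strictly reduces the contribution $1 - f_e$ on those edges and is harmless. Step 3, which sets $f_{uw} = 1$ for every $w$ whenever $u \in S$ (where $S$ is the Step-3 trigger set), can only raise the cost at $u$ itself and at each positive neighbor of $u$, while decreasing cost on every negative edge incident to $u$. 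Writing the $f$-cost at a vertex $v$ as the $d$-cost at $v$ plus a correction of the form $\sum_{w \in N_v^+ \cap S'}(1 - d_{vw})$---with $S' = N_v^+$ when $v \in S$ and $S' = S$ when $v \notin S$---the task reduces to bounding this correction by $O(\textsf{OPT})$.

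The first step is to show $\Delta_u = O(\textsf{OPT})$ for every $u \in S$, using a combinatorial charging argument against OPT as foreshadowed in Section 2.4. Consider OPT's cluster $C(u)$. If $u$ already has at least $0.15|N_u^+|$ positive disagreements in OPT, then $\Delta_u \leq O(\textsf{OPT})$ directly. Otherwise $|N_u^+ \cap C(u)|/|N_u^+| \geq 0.85$, so Proposition \ref{prop: pos_overlap} applies to each of the at least $\tfrac{10}{3}\Delta_u$ close negative neighbors $w$ of $u$ guaranteed by the Step-3 trigger: either $w \in C(u)$, contributing a negative disagreement to $u$'s OPT count, or $w \notin C(u)$, forcing at least $0.15|N_u^+|$ positive disagreements at $w$. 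In either sub-case OPT incurs $\Omega(\Delta_u)$ disagreements at some vertex, so $\Delta_u = O(\textsf{OPT})$. This already handles $v \in S$, since then the $f$-cost at $v$ is simply $|N_v^+| = \Delta_v + 1 = O(\textsf{OPT})$.

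For $v \notin S$ the correction term is at most $|N_v^+ \cap S|$, so it suffices to bound $|N_v^+ \cap S|$. The plan is to split on the value of $d_{vu}$ over $u \in N_v^+ \cap S$. If every such $u$ satisfies $d_{vu} \geq \tfrac{1}{2}$, then each contributes at least $\tfrac{1}{2}$ to the $d$-cost at $v$ on its positive edge, and DMN23's $8\textsf{OPT}$ bound gives $|N_v^+ \cap S| \leq 16\textsf{OPT}$. Otherwise some $u \in N_v^+ \cap S$ has $d_{vu} < \tfrac{1}{2}$, which by the definition of the correlation metric yields $|N_u^+ \cap N_v^+| > \tfrac{1}{2}|N_u^+ \cup N_v^+| \geq \tfrac{1}{2}|N_v^+|$; combined with $|N_u^+ \cap N_v^+| \leq |N_u^+| = \Delta_u + 1 = O(\textsf{OPT})$, this gives $\Delta_v = O(\textsf{OPT})$, and hence $|N_v^+ \cap S| \leq \Delta_v = O(\textsf{OPT})$.

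The main obstacle is the charging argument for $u \in S$ in the second paragraph: we must carefully trace OPT's disagreements through $u$, its close negative neighbors, and their positive neighbors without double-counting, and ensure the dichotomy on $|N_u^+ \cap C(u)|/|N_u^+|$ interacts correctly with Proposition \ref{prop: pos_overlap}. Once that is in place, adding DMN23's $d$-cost bound of $8\textsf{OPT}$ to the correction bound on each vertex, and carefully tracking the constants $0.3$, $\tfrac{10}{3}$, $0.15$, $0.85$, and the threshold $\tfrac{1}{2}$ used in the case split, should give the claimed $56 \cdot \textsf{OPT}$ upper bound.
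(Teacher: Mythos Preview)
Your proposal is correct, but it takes a noticeably more laborious path than the paper for the key step of bounding $\Delta_u$ when $u \in S$ (the paper calls this set $R_1$). You argue via Proposition~\ref{prop: pos_overlap} and a case analysis on the optimal clustering: either $u$ has many positive disagreements, or one of its close negative neighbors $w$ must have $\Omega(\Delta_u)$ positive disagreements by Proposition~\ref{prop: pos_overlap}, or all such $w$ lie in $C(u)$ and give $u$ many negative disagreements. This works for the $\ell_\infty$-norm since you only need \emph{some} vertex to have $\Omega(\Delta_u)$ disagreements. The paper, by contrast, never looks at $\textsf{OPT}$'s clustering at all for this step: it simply observes that Step~3's trigger guarantees at least $\tfrac{10}{3}\Delta_u$ negative edges at $u$ each with $1-d_e \geq 0.3$, so rounding these up \emph{decreases} the negative contribution by at least $\Delta_u$, which exactly offsets the at-most-$\Delta_u$ increase on positive edges. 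Hence the $f$-cost at $u \in R_1$ is at most the $d$-cost, which is $\leq 8\,\textsf{OPT}$ by DMN23, and since the $f$-cost at such $u$ equals $\Delta_u$ (up to the self-loop), this immediately gives $\Delta_u \leq 8\,\textsf{OPT}$.

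For $v \notin S$, the two arguments are essentially the same idea with cosmetic differences: you use a dichotomy on whether every $u \in N_v^+ \cap S$ has $d_{vu} \geq \tfrac{1}{2}$, while the paper splits the sum at threshold $\tfrac{1}{4}$ and invokes Proposition~\ref{prop: similar_nbhds} (the $d_{uv} \leq \tfrac{1}{4} \Rightarrow |N_u^+| \leq \tfrac{7}{3}|N_v^+|$ bound). Your approach buys nothing here and loses the cleaner constants; the paper's cancellation argument for $u \in R_1$ is the insight worth internalizing, since it shows Step~3 is ``self-financing'' at the triggered vertex without any reference to the optimal solution.
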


The lemma follows from the fact that the fractional cost of the correlation metric $d$ in the $\ell_{\infty}$-norm is known to be bounded by \cite{DMN23}, and that it only decreases when $d$ is replaced by $f$ due to Definition \ref{def: adj_corr_metric}. For completeness, we include a proof in Appendix \ref{sec: infty_norm}. 

For finite $p \in \mathbb{R}_{\geq 1}$, our techniques hold for any $\ell_p$-norm, as we only require the convexity of $x^p$. As far as we know, there has been no study initiated on the problem for $0 \leq p<1$.
We use two primary lemmas---one for the positive edge fractional cost and one for the negative edge fractional cost---to show that the adjusted correlation metric well approximates the optimal for general $\ell_p$-norms.

\begin{lemma}\label{lem:lp-bdd}
    The fractional cost of the adjusted correlation metric $f$ in the $\ell_p$-norm objective is a constant factor (independent of $p$) away from the cost of the optimal integral solution in the $\ell_p$-norm.
\end{lemma}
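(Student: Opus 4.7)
The plan is to prove $\|y^f\|_p \leq O(1)\cdot\|y^*\|_p$ by a vertex-wise combinatorial charging argument, where $y^*$ is the disagreement vector of an optimal clustering $\mathcal{C}^*$ for the $\ell_p$-norm and $y^f_u := \sum_{v\in N_u^+} f_{uv} + \sum_{v\in N_u^-}(1-f_{uv})$ is the per-vertex fractional cost. The first observation is a trivial upper bound $y^f_u = O(\Delta_u)$: the positive contribution is at most $\Delta_u$, and Step~3 of Definition~\ref{def: adj_corr_metric} ensures the negative contribution is at most $\tfrac{10}{3}\Delta_u$ (otherwise all $f_{uv}$ are rounded to $1$, so the negative contribution is $0$). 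The goal is then to charge $y^f_u$, for each $u$, to the disagreements $y^*_v$ of a carefully chosen witness set $W_u\subseteq V$ of size $O(\Delta_u)$.

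The natural case split is by how well $\mathcal{C}^*$ clusters $u$: call $u$ \emph{unhappy} if $|N_u^+\setminus C^*(u)| \geq c\Delta_u$ or $|N_u^-\cap C^*(u)|\geq c\Delta_u$ for a small absolute constant $c$ (say $c=0.05$), and \emph{happy} otherwise. Unhappy $u$'s already satisfy $y^*_u\geq c\Delta_u$, so the trivial bound gives $(y^f_u)^p \leq O(1)^p(y^*_u)^p$ vertex-by-vertex, and these vertices collectively contribute at most $O(1)^p\|y^*\|_p^p$ to $\|y^f\|_p^p$.

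The heart of the argument is for happy $u$'s, where $y^*_u$ may be small but $u$ still has few true disagreements in $\mathcal{C}^*$. Here I would split $y^f_u$ into its positive and negative halves. For a positive edge $(u,v)$ with $v\in N_u^+\cap C^*(u)$, each vertex $w\in N_u^+\triangle N_v^+$ produces an OPT disagreement at $u$ or at $v$ (routine case analysis using $C^*(u)=C^*(v)$), yielding $|N_u^+\triangle N_v^+|\leq y^*_u + y^*_v$ and therefore $d_{uv}\leq (y^*_u+y^*_v)/\max(\Delta_u,\Delta_v)$. For a negative edge $(u,v)$ contributing to $y^f_u$, Step~2 forces $d_{uv}\leq 0.7$, so Proposition~\ref{prop: pos_overlap} applies: either $v\in C^*(u)$ and $(u,v)$ is itself an OPT disagreement at $u$, or $v\notin C^*(u)$ and the $\geq 0.15\Delta_u$ common positive neighbors of $u,v$ lying in $C^*(u)\setminus C^*(v)$ are positive disagreements at $v$, giving $y^*_v\geq 0.15\Delta_u$. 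The auxiliary case where Step~3 fires at $u$ (so $y^f_u=\Delta_u$) is handled identically, using the $\Omega(\Delta_u)$ close negative neighbors of $u$ as witnesses. Combining these estimates gives, for every happy $u$, an inequality of the form $y^f_u \leq O(y^*_u) + \tfrac{O(1)}{\Delta_u}\sum_{v\in W_u} y^*_v$.

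To lift the per-vertex bounds to the $\ell_p$-bound, I would raise to the $p$-th power using $(a+b)^p \leq 2^{p-1}(a^p+b^p)$ and Jensen's inequality for $x\mapsto x^p$, which gives
\begin{equation*}
(y^f_u)^p \leq O(1)^p\Bigl((y^*_u)^p + \tfrac{1}{\Delta_u}\sum_{v\in W_u}(y^*_v)^p\Bigr).
\end{equation*}
Summing over $u$ and swapping the order of summation yields
\begin{equation*}
\|y^f\|_p^p \leq O(1)^p\|y^*\|_p^p + O(1)^p\sum_{v}(y^*_v)^p\sum_{u:\,v\in W_u}\tfrac{1}{\Delta_u}.
\end{equation*}
The main obstacle, and the real technical crux of the proof, is to show that $\sum_{u:v\in W_u}1/\Delta_u = O(1)$ uniformly in $v$. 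I expect this to be established by a case split on the role $v$ plays in $W_u$: positive-edge witnesses force $v\in N_u^+\cap C^*(u)$, so $u$ lies in the positive neighborhood of $v$ restricted to its own cluster and the degrees $\Delta_u,\Delta_v$ are tightly coupled through the bound $d_{uv}\leq (y^*_u+y^*_v)/\max(\Delta_u,\Delta_v)$; negative-edge witnesses force $\Delta_u = O(y^*_v)$, so only OPT-heavy $v$'s can be witnessed many times, and reciprocal-degree contributions can be amortized against $y^*_v$ itself. Since Jensen has coefficient $1$ and $2^{(p-1)/p}\leq 2$, taking $p$-th roots produces the claimed $O(1)$ bound with a constant independent of $p$.
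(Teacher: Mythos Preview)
Your overall architecture---split into happy/unhappy vertices, per-vertex charging to a witness set, Jensen, then swap the sums---is close in spirit to the paper's proof, and several of your ingredients (the bound $|N_u^+\triangle N_v^+|\le y^*_u+y^*_v$ when $v\in C^*(u)$, and the use of Proposition~\ref{prop: pos_overlap} for close negative neighbors) appear there too. But there is one genuine gap and one under-justified step.

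\textbf{The gap: positive neighbors in $R_1$.} You only discuss ``Step~3 fires at $u$''. But Step~3 may fire at a \emph{neighbor} $v\in N_u^+$ while $u\in R_2$; then $f_{uv}=1$ even though $d_{uv}$ can be arbitrarily small, so your bound $d_{uv}\le (y^*_u+y^*_v)/\max(\Delta_u,\Delta_v)$ is irrelevant. Crucially, $v\in R_1$ does \emph{not} force $y^*_v$ to be large, so you cannot simply add such $v$ to $W_u$ and keep the per-vertex inequality $y^f_u \le O(y^*_u)+\tfrac{O(1)}{\Delta_u}\sum_{v\in W_u}y^*_v$. The paper treats this contribution (their term $S_{13}^+$) separately: it builds an auxiliary bipartite graph between $R_2$ and $R_1$, double-counts $\sum_{f=uv}(\deg_H(u)+\deg_H(v))^{p-1}$ to get $\sum_{u\in R_2}\deg_H(u)^p \le 4^{p-1}\sum_{v\in R_1}\Delta_v^p$, and then proves a standalone global bound $\sum_{v\in R_1}\Delta_v^p\le O(1)^p\cdot\textsf{OPT}^p$ (Proposition~\ref{prop: R1_bounding}) via a cluster-based charging argument. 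This is a two-hop charge (from $u$ through $v\in R_1$ to $v$'s close negative neighbors $w$) that does not fit your one-hop $W_u$ framework; at minimum you would need to enlarge $W_u$ to include such $w$'s and redo the swap-sum analysis for them.

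\textbf{The under-justified step: the swap-sum bound.} Your sentence ``negative-edge witnesses force $\Delta_u=O(y^*_v)$, so \dots amortize against $y^*_v$'' does not establish $\sum_{u:v\in W_u}1/\Delta_u=O(1)$. What actually works is: for each happy $u$ with $v$ as a negative-edge witness, happiness gives $|C^*(u)|\in[0.95\Delta_u,1.05\Delta_u]$, and Proposition~\ref{prop: pos_overlap} forces $|N_v^+\cap C^*(u)|\ge\Omega(\Delta_v)$; since clusters are disjoint and $|N_v^+|=\Delta_v$, only $O(1)$ clusters can contain such a $u$, and within each cluster the contribution $\sum_u 1/\Delta_u\le |C^*|\cdot O(1)/|C^*|=O(1)$. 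This is essentially the paper's Case~2b charging argument in Lemma~\ref{lem: lp_bdd-neg}, repackaged; it is correct but it is the heart of the matter, not a routine amortization. (Your positive-edge swap-sum is fine for $v\in R_2$ by the same happiness-controls-cluster-size observation, but again breaks for $v\in R_1$ as above.)
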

\begin{proof}
Let $y$ be the disagreement vector for an optimal clustering $\mathcal{C}$ in the $\ell_p$-norm, for any $p \in \mathbb{R}_{\geq 1} \cup \{\infty\}$. 
When $p = \infty$, see Lemma \ref{lem: infty_proof}.
For $p\in \mathbb{R}_{\geq 1}$,
by definition $\textsf{OPT}^p = \sum_{w \in V} (y(w))^p,$
and the $p$th power of the fractional cost of $f$ is given by 
\[cost(f)^p = \sum_{u \in V} \Big[ \sum_{v \in N_u^+} f_{uv} + \sum_{v \in N_u^-} (1-f_{uv})\Big]^p.\]
\[\text{Observe that }\quad cost(f)^p \leq 2^p \underbrace{\sum_{u \in V} \Big(\sum_{v \in N_u^+} f_{uv} \Big)^p}_{(S^+)^p} + 2^p \underbrace{\sum_{u \in V} \Big(\sum_{v \in N_u^-} (1-f_{uv}) \Big)^p}_{(S^-)^p}.\]

We refer to bounding $(S^+)^p$ as bounding the fractional cost of the positive edges, and likewise $(S^-)^p$ for the negative edges. 
The first sum, $(S^+)^p$, is bounded in Lemma \ref{lem: lp_bdd-pos} and the second sum, $(S^-)^p$, is bounded in Lemma \ref{lem: lp_bdd-neg}. Using those two bounds, together we have 
$cost(f) \leq [2^p((S^+)^p + (S^-)^p)]^{1/p} \leq 529,$
for $p \in [1, \infty)$. Specifically, the middle term is maximized at $p=1$, giving the bound of 529, and tends to below 214 as $p \rightarrow \infty$. (A more tailored analysis in Appendix \ref{sec:l1-norm} gives a constant of 74 for $p=1$.)
\end{proof}
We note that, as our main interest is determining whether a simultaneous constant approximation is even possible (and a combinatorial one, at that), we did not pay particular attention to optimizing constants, but suspect these could be greatly reduced.

\subsubsection{Fractional cost of positive edges in $\ell_p$-norms} \label{sec: pos_edges_lp}

Bounding the fractional cost of the positive edges in the $\ell_p$-norm will be similar to in the simpler $\ell_1$-norm (see Appendix \ref{sec:l1-norm}), only there will be an extra step in which we apply Jensen's inequality. 
\begin{lemma} \label{lem: lp_bdd-pos}
    For $p \in \mathbb{R}_{\geq 1}$, the fractional cost of the adjusted correlation metric $f$ in the $\ell_p$-norm objective for the set of positive edges is a constant factor approximation to the optimal, i.e.,
    $$(S^+)^p =  \sum_{u \in V} \Big(\sum_{v \in N_u^+} f_{uv} \Big)^p \leq 
2^p \cdot [(8^p/2 +1)((20/3)^p + 2 + 2 \cdot 4^p) + 8^p +1 ] \cdot \textsf{OPT}^p.
$$
\end{lemma}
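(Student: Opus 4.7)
The plan is to bound the inner sum $\sum_{v \in N_u^+} f_{uv}$ pointwise at each vertex $u$, apply a convexity (Jensen-type) inequality to pass to the $p$-th power, and then sum over $u$ with a conservative charging scheme so that each $y(w)^p$ is counted only $O(1)$ times. Fix an optimal clustering $\mathcal{C}^*$ with disagreement vector $y$. Call $u$ \emph{well-clustered} if $|N_u^+ \cap C^*(u)| \geq 0.85\,|N_u^+|$; otherwise $y(u) \geq 0.15\,\Delta_u$, and hence $\Delta_u \leq \tfrac{20}{3}\,y(u)$, which is the source of the $(20/3)^p$ term in the claimed bound.

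I first split the analysis according to whether $u$ triggers Step 3 of Definition \ref{def: adj_corr_metric}. If it does, then $f_{uv}=1$ for every $v \in N_u^+$, so $\sum_{v \in N_u^+} f_{uv} = \Delta_u$. For poorly-clustered such $u$, the inequality above immediately yields $\Delta_u^p \leq (20/3)^p\,y(u)^p$. For well-clustered such $u$, I invoke Proposition \ref{prop: pos_overlap} on the $\geq \tfrac{10}{3}\,\Delta_u$ close negative neighbors $v$ of $u$: each such $v$ is either inside $C^*(u)$ (a disagreement counted in $y(u)$) or outside $C^*(u)$, in which case it has at least $0.15\,\Delta_u$ disagreements on its positive edges into $C^*(u)$. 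This permits charging $\Delta_u$ to $y(u)$ together with a small witness set of negative neighbors. If instead $u$ does not trigger Step 3, then $f_{uv}=d_{uv}$ on positive neighbors, and I bound $\sum_{v \in N_u^+} d_{uv}$ along the lines of the $\ell_1$-version (Claim \ref{claim: l1-pos-cm}), again leveraging Proposition \ref{prop: pos_overlap} when $u$ is well-clustered to relate small $d_{uv}$ to a large shared positive neighborhood in $C^*(u)$.

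Combining the cases gives a pointwise inequality of the form $\sum_{v \in N_u^+} f_{uv} \leq c_1\,y(u) + \sum_{i=2}^{k} c_i\,\bigl(\sum_{w \in W_{u,i}} y(w)\bigr)$ with $k$ a small constant (I expect $k=4$, matching the $2 \cdot 4^p$ factor in the stated bound). Applying the power-mean inequality $(a_1+\cdots+a_k)^p \leq k^{p-1}(a_1^p + \cdots + a_k^p)$ to raise to the $p$-th power, and Jensen's inequality inside each $\bigl(\sum_{w \in W_{u,i}} y(w)\bigr)^p$ term to pull the power inside the sum, then summing over $u$ and swapping the order of summation, produces $\sum_u \sum_{w \in W_{u,i}} y(w)^p$. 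The critical property is that each $w$ lies in at most $O(1)$ of the sets $W_{u,i}$, so this double sum is $O(\textsf{OPT}^p)$.

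The main obstacle is choosing the witness sets $W_{u,i}$ so that the charging is conservative uniformly in $p$. Proposition \ref{prop: pos_overlap} supplies the right witnesses: whenever $d_{uv} \leq 0.7$ and $u$ is well-clustered, at least $0.15\,|N_u^+|$ shared positive neighbors in $C^*(u)$ are available, so each $w$ serves as a witness for only $O(1/0.15)=O(1)$ pairs. Making this multiplicity bound rigorous while tracking which case (Step 3 or not, well-clustered or not) applies at each endpoint is the central bookkeeping step, and it is what distinguishes the $\ell_p$ argument from the single-sum $\ell_1$ analysis.
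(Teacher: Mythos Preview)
Your proposal has two genuine gaps.

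\textbf{First}, the assertion ``If instead $u$ does not trigger Step 3, then $f_{uv}=d_{uv}$ on positive neighbors'' is false. Even when $u \in R_2$, a positive neighbor $v$ may lie in $R_1$, in which case $f_{uv}=1$ regardless of $d_{uv}$. This is not a corner case: it is exactly the contribution the paper isolates as $S_{13}^+$, and bounding it requires a separate argument (the paper builds a bipartite auxiliary graph on $R_2 \times R_1$, double-counts $\sum_{uv \in F}(\deg_H(u)+\deg_H(v))^{p-1}$, and invokes Proposition~\ref{prop: similar_nbhds} together with Proposition~\ref{prop: R1_bounding}). Your outline does not account for this term at all.

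\textbf{Second}, the Jensen-then-swap scheme does not close. Applying Jensen to $\bigl(\sum_{w \in W_{u,i}} y(w)\bigr)^p$ yields $|W_{u,i}|^{p-1}\sum_{w \in W_{u,i}} y(w)^p$, so after summing over $u$ you must control $\sum_{u: w \in W_{u,i}} |W_{u,i}|^{p-1}$, not merely the number of $u$ for which $w \in W_{u,i}$. Your multiplicity claim (``each $w$ serves as a witness for only $O(1/0.15)$ pairs'') does not follow from Proposition~\ref{prop: pos_overlap}: the proposition guarantees \emph{many} witnesses per pair, which says nothing about how many pairs a fixed $w$ witnesses, and in any case the extra factor $|W_{u,i}|^{p-1}$ (which is typically $\Delta_u^{p-1}$) remains. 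The paper avoids this by two different mechanisms: for the $d_{uv}$-contribution ($S_{14}^+$) the denominator $|N_u^+ \cup N_v^+|^p$ in $d_{uv}$ absorbs the Jensen factor exactly; for the $R_1$-contributions it abandons pointwise bounds entirely and instead has each $w$ dispense charge $\varphi(u,w)^p/|C(u)|$ to $u \in C(w)\cap N_w^+$, so that the normalization by $|C(u)|=|C(w)|$ caps $w$'s total output at $y(w)^p$. Neither of these cancellations is visible in your plan, and without them the argument blows up by a factor of $\Delta^{p-1}$.
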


One of the challenges in bounding the cost of $f$ is that disagreements in the $\ell_p$-norm objective for $p \neq 1$ are asymmetric, in that a disagreeing edge charges $y(u)$ and $y(v)$ (whereas for $p=1$ we can just sum the number of disagreeing edges). Step 3 rounds up the edges incident to $u$ when the tradeoff is good \textit{from $u$'s perspective}. However, an edge $(u,v)$ may be rounded up to 1 when this tradeoff is good from $v$'s perspective, but not from $u$'s perspective. The high-level idea for why this is fine is that if $u$ and $v$ are close under $d$, their positive neighborhoods overlap significantly and, in some average sense, $u$ can charge to $v$. Proving this requires a double counting argument using a bipartite auxiliary graph. If $u$ and $v$ are far under $d$, on the other hand, we can charge to the cost of the correlation metric, which will be bounded on an appropriate subgraph.

The second challenge is showing that the $\ell_p$-norm of the disagreement vector, restricted to vertices $u$ that are made singletons in Step 3, is bounded. This again requires a double counting argument.

\begin{proof}
Fix an optimal clustering $\mathcal{C}$. 
We partition vertices based on membership in $C(u)$ or $\overline{C(u)}$ (as defined in Subsection \ref{sec: notation}). Let $y$ denote the disagreement vector of $\mathcal{C}$. We have
\begin{align*}
    (S^+)^p &= \sum_{u \in V} \Big( \sum_{v \in N_u^+} f_{uv}\Big)^p 
    \leq 2^p \underbrace{\sum_{u \in V} \Big(\sum_{v \in N_u^+ \cap C(u)} f_{uv}\Big)^p}_{S_1^+} + 2^p \underbrace{\sum_{u \in V} \Big(\sum_{v \in N_u^+ \cap \overline{C(u)}} f_{uv}\Big)^p}_{S_2^+}.
\end{align*}
 It is easy to bound $S_2^+$ by using the trivial upper bound $f_{uv} \leq 1$:
\[\boxed{S_2^+ = \sum_{u \in V} \Big(\sum_{v \in N_u^+ \cap \overline{C(u)}} f_{uv}\Big)^p \leq \sum_{u \in V} \Big(\sum_{v \in N_u^+ \cap \overline{C(u)}} 1 \Big)^p \leq \sum_{u \in V} (y(u))^p = \textsf{OPT}^p,}\]
where we have used that every edge $(u,v) \in E^+$ with $v \not \in C(u)$ is a disagreement incident to $u$.

Next, we bound $S_1^+$. Let $R_1$ be the set of $u$ for which Step 3 of Definition \ref{def: adj_corr_metric} applies. For these $u$, we have $f_{uv} = 1$ for all $v \in V \setminus \{u\}$. Let $R_2 = V \setminus R_1$. For $u \in R_2$ and $v \in N_u^+$, we have that either $v\in R_2$, in which case $f_{uv} = d_{uv}$; or $v \in R_1$, in which case $f_{uv}=1$. (Note that $V$ is the disjoint union of $R_1$ and $R_2$.) So 
\begin{align*}S_1^+ = \underbrace{\sum_{u \in R_1} \Big(\sum_{v \in N_u^+ \cap C(u), v \neq u} 1 \Big)^p}_{S_{11}^+} + \underbrace{\sum_{u \in R_2}\Big(\sum_{v \in N_u^+ \cap C(u)} f_{uv}\Big)^p}_{S_{12}^+}, \end{align*}
and in particular 
\begin{align*}
    S_{12}^+ &= \sum_{u \in R_2}\Big(\sum_{v \in N_u^+ \cap C(u) \cap R_1} 1 + \sum_{v \in N_u^+ \cap C(u) \cap R_2} d_{uv} \Big)^p \\
    &= \sum_{u \in R_2}\Big(\sum_{\substack{v \in N_u^+ \cap C(u) \cap R_1\\ d_{uv} \leq 1/4}} 1 + \sum_{\substack{v \in N_u^+ \cap C(u) \cap R_1\\ d_{uv} \geq 1/4}} 1 + \sum_{v \in N_u^+ \cap C(u) \cap R_2} d_{uv} \Big)^p \\
    &\leq \sum_{u \in R_2}\Big(\sum_{\substack{v \in N_u^+ \cap C(u) \cap R_1\\ d_{uv} \leq 1/4}} 1 + \sum_{\substack{v \in N_u^+ \cap C(u) \cap R_1\\ d_{uv} \geq 1/4}} 4 \cdot d_{uv} + \sum_{v \in N_u^+ \cap C(u) \cap R_2} d_{uv} \Big)^p \\
    &\leq \sum_{u \in R_2}\Big(\sum_{\substack{v \in N_u^+ \cap C(u) \cap R_1\\ d_{uv} \leq 1/4}} 1 +  \sum_{v \in N_u^+ \cap C(u)} 4 \cdot d_{uv} \Big)^p 
    \leq 2^p  \underbrace{\sum_{u \in R_2}\Big(\sum_{\substack{v \in N_u^+ \cap R_1\\ d_{uv} \leq 1/4}} 1 \Big)^p}_{S_{13}^+} + 8^p \cdot \underbrace{\sum_{u \in R_2} \Big(\sum_{v \in N_u^+ \cap C(u)}  d_{uv} \Big)^p}_{S_{14}^+}.
\end{align*}
First we will bound $S_{13}^+$. To do so, we will strongly use that $d_{uv} \leq 1/4$ in the inner sum. In particular, we will make use of the following easy proposition. 
\begin{proposition} \label{prop: similar_nbhds}
Let $d$ be the correlation metric, and suppose $d_{uv} \leq 1/4$. Then $|N_u^+| \leq \frac{7}{3} \cdot |N_v^+|$.
\end{proposition}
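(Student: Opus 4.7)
The plan is a direct algebraic unpacking of the definition of the correlation metric. I would write $A = |N_u^+|$, $B = |N_v^+|$, and $I = |N_u^+ \cap N_v^+|$, so that $|N_u^+ \cup N_v^+| = A + B - I$. The hypothesis $d_{uv} \le 1/4$ then translates into $I/(A + B - I) \ge 3/4$, which rearranges to the single linear inequality $7I \ge 3(A+B)$.

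The key (and only substantive) step is to combine this with the trivial containment $N_u^+ \cap N_v^+ \subseteq N_v^+$, which gives $I \le B$. Substituting into $7I \ge 3A + 3B$ yields $7B \ge 3A + 3B$, i.e., $A \le \tfrac{4}{3} B$, which is stronger than the claimed $A \le \tfrac{7}{3} B$ and so finishes the proof.

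I do not expect any real obstacle: the proposition is a purely arithmetic consequence of the definition of $d_{uv}$ together with the containment of the intersection in $N_v^+$. The only mildly curious thing is that the same argument actually gives the sharper constant $4/3$ in place of $7/3$; presumably the looser $7/3$ is stated because it matches the form used in the adjacent $d_{uv} \le 0.7$ regime of Proposition~\ref{prop: pos_overlap} and Step~2 of Definition~\ref{def: adj_corr_metric}, so that the proposition can be invoked uniformly in the bookkeeping of the $S_{13}^+$ charging argument without re-deriving constants for each threshold.
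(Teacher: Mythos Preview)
Your proof is correct and is essentially identical to the paper's: both rewrite $d_{uv}\le 1/4$ as $7I \ge 3A + 3B$ and then use $I \le B$. Your observation that this actually yields the sharper bound $A \le \tfrac{4}{3}B$ is valid; the paper simply drops the $3B$ term on the right-hand side to conclude $7B \ge 3A$, stating the weaker $7/3$ constant that suffices for its purposes.
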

\begin{proof}[Proof of Proposition \ref{prop: similar_nbhds}]
    Since $d_{uv} \leq 1/4$ implies $1-d_{uv} \geq 3/4$, we see that
    \[\frac{|N_u^+ \cap N_v^+|}{|N_u^+| + |N_v^+| - |N_u^+ \cap N_v^+|} \geq 3/4,\]
    and we complete the proof since
    $7\cdot|N_v^+| \geq 7\cdot|N_u^+ \cap N_v^+| \geq 3\cdot|N_u^+| + 3\cdot|N_v^+| \geq 3\cdot|N_u^+|.$
\end{proof}
Next, we will need to create a bipartite auxiliary graph $H = (R_2, R_1, F)$ with $R_2$ and $R_1$ being the sides of the partition, and $F$ being the edge set. We will then use a double counting argument. Place an edge between $u \in R_2$ and $v \in R_1$ if $uv \in E^+$ and $d_{uv} \leq 1/4$. Then we have precisely that 
\[S_{13}^+ = \sum_{u \in R_2} \text{deg}_{H}(u)^p.\] 
 We will show that 
\begin{equation} \label{eq: aux_desired_bound}
  \boxed{S_{13}^+ = \sum_{u \in R_2} \text{deg}_H(u)^p \leq  4^{p-1} \cdot \sum_{v \in R_1} |N_v^+|^p \leq 4^{p-1} \cdot \left((20/3)^p + 2 + 2 \cdot 4^p \right) \cdot \textsf{OPT}^p}
\end{equation}
where the last bound  follows from Proposition \ref{prop: R1_bounding}, which we establish separately below. We will bound via double counting the quantity $L$, defined below. Let $N_H(\cdot)$ denote the neighborhoods in $H$ of the vertices. 
\begin{align}
    L &:= \sum_{f=uv \in F} \left(\text{deg}_H(u) + \text{deg}_H(v) \right)^{p-1}  
    \leq \sum_{v \in R_1} \sum_{u \in N_{H}(v)} \left(\text{deg}_H(v) + \text{deg}_H(u) \right)^{p-1} \notag \\
    &\leq \sum_{v \in R_1} \sum_{u \in N_{H}(v)} \left( |N_v^+| + |N_u^+| \right)^{p-1}  
    \leq \sum_{v \in R_1} \sum_{u \in N_{H}(v)} 4^{p-1} \cdot |N_v^+|^{p-1} \label{eq: closeness_prop} \\
    &\leq 4^{p-1} \cdot \sum_{v \in R_1} |N_v^+| \cdot |N_v^+|^{p-1} \notag 
    = 4^{p-1} \cdot \sum_{v \in R_1} |N_v^+|^p \notag
\end{align}
 where in (\ref{eq: closeness_prop}) we've used Proposition \ref{prop: similar_nbhds}. Note that $L$ is upper bounded by the right-hand side in (\ref{eq: aux_desired_bound}). Now it just remains to show that $L$ is lower bounded by the left-hand side in (\ref{eq: aux_desired_bound}).
\begin{align*}
    L &= \sum_{f=uv \in F} \left(\text{deg}_H(u) + \text{deg}_H(v) \right)^{p-1} 
    = \sum_{u \in R_2} \sum_{v \in N_H(u)} \left(\text{deg}_H(u) + \text{deg}_H(v) \right)^{p-1} \\
    &\geq \sum_{u \in R_2} \sum_{v \in N_H(u)} \text{deg}_H(u)^{p-1} 
    = \sum_{u \in R_2} \text{deg}_H(u) \cdot \text{deg}_H(u)^{p-1} 
    = \sum_{u \in R_2} \text{deg}_H(u)^{p},
\end{align*}
 which is what we sought to show.
Now we bound $S_{14}^+$. 
\begin{align*}
    S_{14}^+ &\leq \sum_{u \in V} \Bigg(\sum_{v \in N_u^+ \cap C(u)} \frac{|N_u^+ \cap N_v^-| + |N_u^- \cap N_v^+|}{|N_u^+ \cup N_v^+|} \Bigg)^p \\
    &\leq \sum_{u \in V} \Bigg(\sum_{v \in N_u^+} \frac{y(u) + y(v)}{|N_u^+ \cup N_v^+|} \Bigg)^p 
    \leq \sum_{u \in V} |N_u^+|^{p-1} \sum_{v \in N_u^+} \frac{(y(u) + y(v))^p}{|N_u^+ \cup N_v^+|^p} \\
    &\leq 2^p\sum_{u \in V}\sum_{v \in N_u^+} |N_u^+|^{p-1} \cdot \frac{y(u)^p}{|N_u^+ \cup N_v^+|^p} + 2^p\sum_{u \in V}\sum_{v \in N_u^+} |N_u^+|^{p-1} \cdot \frac{y(v)^p}{|N_u^+ \cup N_v^+|^p}. 
\end{align*}
In the second line, the first inequality uses the fact that for $w \in (N_u^+ \cap N_v^-) \cup (N_u^- \cap N_v^+)$, then at least one of $(u,w), (v,w)$ is a disagreement, since $v \in C(u)$ in the inner summation of the first line. The second inequality in the second line uses Jensen's inequality. 

To bound the first double sum above, we use an averaging argument:
$$\sum_{u \in V}\sum_{v \in N_u^+} |N_u^+|^{p-1} \cdot \frac{y(u)^p}{|N_u^+ \cup N_v^+|^p} \leq \sum_{u \in V} \sum_{v \in N_u^+} \frac{y(u)^p}{|N_u^+|} = \sum_{u \in V} y(u)^p = \textsf{OPT}^p.$$
To bound the second double sum, we first have to flip it:
\begin{align*}
\sum_{u \in V}\sum_{v \in N_u^+} |N_u^+|^{p-1} \cdot \frac{y(v)^p}{|N_u^+ \cup N_v^+|^p} &= \sum_{v \in V} \sum_{u \in N_v^+} |N_u^+|^{p-1} \cdot \frac{y(v)^p}{|N_u^+ \cup N_v^+|^p} \\
&\leq \sum_{v \in V} \sum_{u \in N_v^+} |N_u^+|^{p-1} \frac{y(v)^p}{|N_u^+|^{p-1} \cdot |N_v^+|} 
= \sum_{v \in V} y(v)^p = \textsf{OPT}^p.
\end{align*}
\text{In total, we have }
$$\boxed{S_{14}^+ \leq 2 \cdot 2^p \cdot \textsf{OPT}^p = 2^{p+1} \cdot \textsf{OPT}^p}$$ 

$$\text{and } \quad\boxed{S_{12}^+ \leq 2^p \cdot S_{13}^+ + 8^p \cdot S_{14}^+ \leq 2^p \cdot 4^{p-1} \cdot \left((20/3)^p + 2 + 2 \cdot 4^p \right) \cdot \textsf{OPT}^p + 8^p \cdot \textsf{OPT}^p.}$$

Next we turn to bounding $S_{11}^+$. Recall that $R_1 = \{u : |N_u^- \cap \{v: d_{uv} \leq 0.7\}| \geq \frac{10}{3} \cdot \Delta_u\} $ 
and
\[S_{11}^+ \leq \sum_{u \in R_1} |N_u^+ \cap C(u)|^p \leq \sum_{u \in R_1} |N_u^+|^p.\]

So it suffices to bound the right-hand side, which we do in the following proposition. 
\begin{proposition} \label{prop: R1_bounding} Let $R_1$ be the set of $u$ for which Step 3 of Definition \ref{def: adj_corr_metric} applies. Then 
    $$\sum_{u \in R_1} |N_u^+|^p \leq \left((20/3)^p + 2 + 2 \cdot 4^p \right) \cdot \textsf{OPT}^p.$$
\end{proposition}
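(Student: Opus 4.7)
The plan is to partition $R_1$ into three subsets based on the structural reason each $u \in R_1$ has a large positive neighborhood, and bound each subset's contribution to $\sum_{u\in R_1}|N_u^+|^p$ separately. Writing $N_u^-(\textup{close}) := N_u^- \cap \{v : d_{uv} \leq 0.7\}$, membership in $R_1$ means $|N_u^-(\textup{close})| \geq \tfrac{10}{3}\Delta_u$. I would split $R_1 = A \sqcup B \sqcup C$, where $A$ consists of $u$ with $y_+(u) \geq 0.15 |N_u^+|$ (many positive disagreements at $u$), $B$ consists of the remaining $u$ with at least $\tfrac{5}{3}\Delta_u$ of the close negative neighbors lying in $C(u)$ (forcing $y_-(u)$ large), and $C$ is the rest, which by the $R_1$-condition satisfies $|N_u^-(\textup{close}) \setminus C(u)| \geq \tfrac{5}{3}\Delta_u$ together with $|N_u^+ \cap C(u)|/|N_u^+| > 0.85$ (because $u \notin A$).

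Two of these cases are essentially free. For $u \in A$, directly $y(u) \geq 0.15|N_u^+|$, so $|N_u^+|^p \leq (20/3)^p y(u)^p$; summing contributes $(20/3)^p\textsf{OPT}^p$, which is the first term in the claimed bound. For $u \in B$, each of the at least $\tfrac{5}{3}\Delta_u$ close negative neighbors in $C(u)$ is a negative disagreement at $u$, so $y(u) \geq \tfrac{5}{3}\Delta_u$; together with $\Delta_u \approx |N_u^+|$ this yields $|N_u^+|^p \leq y(u)^p$ up to small constants, and summing bounds this piece by a constant times $\textsf{OPT}^p$, accounting for the additive $2$ in the stated bound.

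The hard case is $C$. Here I invoke Proposition \ref{prop: pos_overlap}, whose hypothesis $|N_u^+ \cap C(u)|/|N_u^+| \geq 0.85$ is exactly what the complement of $A$ gives. For each of the $\geq \tfrac{5}{3}\Delta_u$ many $v \in Q_u := N_u^-(\textup{close}) \setminus C(u)$, the proposition produces a witness set $N_u^+ \cap N_v^+ \cap C(u)$ of size $\geq 0.15|N_u^+|$; since $v \notin C(u)$, each witness $w$ is a positive disagreement at $v$, so $y(v) \geq 0.15|N_u^+|$, equivalently $|N_u^+| \leq (20/3)y(v)$ for every such $v$. The goal is to charge $|N_u^+|^p$ to $y(v)^p$'s summing to $O(4^p \cdot \textsf{OPT}^p)$.

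The main obstacle is controlling the multiplicity of this charging: naively averaging over $v \in Q_u$ introduces the uncontrolled factor $k(v)=|\{u \in C : v \in Q_u\}|$, which a priori can be as large as $|N_v^-|$. My plan is a triple-counting argument on $(u,v,w)$ with $v \in Q_u$ and $w$ a witness. The lower bound, summing over $u$, is at least a constant multiple of $\sum_{u\in C}|N_u^+|^2$. For the upper bound I fix a positive disagreement $(v,w)$ and count the $u$'s that can participate: such $u$ must lie in $N_v^- \cap N_w^+ \cap C(w)$, and the $0.85$-concentration of $N_u^+$ inside $C(u)=C(w)$ constrains $u$'s positive neighborhood to overlap heavily with $w$'s, letting me bound this count by a term involving $y_+(v)$. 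The crucial auxiliary fact is that $d_{uv} \leq 0.7$ forces $|N_u^+|$ and $|N_v^+|$ to agree within a factor $10/3$, which lets me convert the second-moment bound into the right normalization; Jensen's inequality then lifts the estimate from $p=2$ to general $p$, producing a bound of the form $2 \cdot 4^p \cdot \textsf{OPT}^p$ and completing the proposition.
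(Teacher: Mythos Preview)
Your partition $R_1 = A \sqcup B \sqcup C$ and the treatment of $A$ and $B$ line up with the paper's Cases~1 and~2b, and those two bounds are fine. Note however that $B$ alone contributes only one $\textsf{OPT}^p$, not the additive~$2$ in the statement; the second $+1$ comes from a further sub-case you are missing inside your set $C$. Specifically, the paper splits $C$ (its Case~2a) once more according to whether $|N_u^- \cap C(u)| \geq \Delta_u$. When yes, $y(u) \geq \Delta_u = |N_u^+|$ and this sub-case is immediate; when no, one obtains $|C(u)| \leq 2\Delta_u$, and this cluster-size bound is \emph{essential} for the remaining argument. Without it you have no handle on the normalization in any charging or counting scheme.

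For the residual hard case the paper does \emph{not} do an unweighted triple count followed by a lift from $p=2$. It runs a direct charging argument in which the $p$-th power is built into the charge from the start: each $w \in N_u^+ \cap C(u)$ dispenses $\varphi(u,w)^p/|C(u)|$ to $u$, where $\varphi(u,w) := |Q_u \cap N_w^+| \leq y(w)$ (every $v \in Q_u \cap N_w^+$ yields a positive disagreement $(w,v)$). Jensen applied to $\sum_w \varphi(u,w)^p$, together with $\sum_w \varphi(u,w) \geq 0.25\,\Delta_u^2$ from Proposition~\ref{prop: pos_overlap} and the bound $|C(u)| \leq 2\Delta_u$, shows $u$ receives at least $\tfrac{1}{2}\cdot 0.25^p\,|N_u^+|^p$; since each $w$ dispenses at most $y(w)^p$ in total, the $2\cdot 4^p\cdot \textsf{OPT}^p$ bound follows. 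Your proposed route has two concrete problems beyond the missing sub-case. First, there is no general ``Jensen lift'' from a second-moment estimate $\sum_u |N_u^+|^2 \leq \cdots$ to $\sum_u |N_u^+|^p \leq \cdots$; Jensen runs the wrong direction for $p>2$, and in the paper it is used only to lower-bound $\sum_w \varphi(u,w)^p$ by $(\sum_w \varphi(u,w))^p/|N_u^+ \cap C(u)|^{p-1}$. Second, the claim that for a fixed disagreement $(v,w)$ the number of participating $u$'s is bounded in terms of $y_+(v)$ is unsubstantiated: the $0.85$-concentration of each $N_u^+$ inside $C(w)$ does not by itself limit how many distinct $u \in N_v^- \cap N_w^+ \cap C(w)$ can occur, since their positive neighborhoods may overlap arbitrarily.
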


\begin{proof}[Proof of Proposition \ref{prop: R1_bounding}]
For $u \in R_1$, define 
$R_1(u) = N_u^- \cap \{v: d_{uv} \leq 0.7\},$
so in particular $|R_1(u)| \geq \frac{10}{3} \cdot \Delta_u$. Fix a vertex $u \in R_1$. We consider a few cases. The crux of the argument is Case 2a(ii). 

\paragraph{Case 1: }\emph{At least a 0.15 fraction of $N_u^+$ is in clusters other than $C(u)$.} \\
Let $ u \in V^{1}$ be the vertices in this case.
This means that $0.15 \cdot |N_u^+| \leq y(u)$, so 
\[\boxed{\sum_{u \in V^1} |N_u^+|^p \leq \sum_{u \in V^1} \frac{1}{0.15^p} y(u)^p \leq (20/3)^p \cdot \textsf{OPT}^p.}\]

\paragraph{Case 2: }\emph{At least a 0.85 fraction of $N_u^+$ is in $C(u)$. }\\
We further partition the cases based on how much $R_1(u)$ intersects $C(u)$.\\

\begin{figure}
    \centering
\includegraphics[width = 8cm]{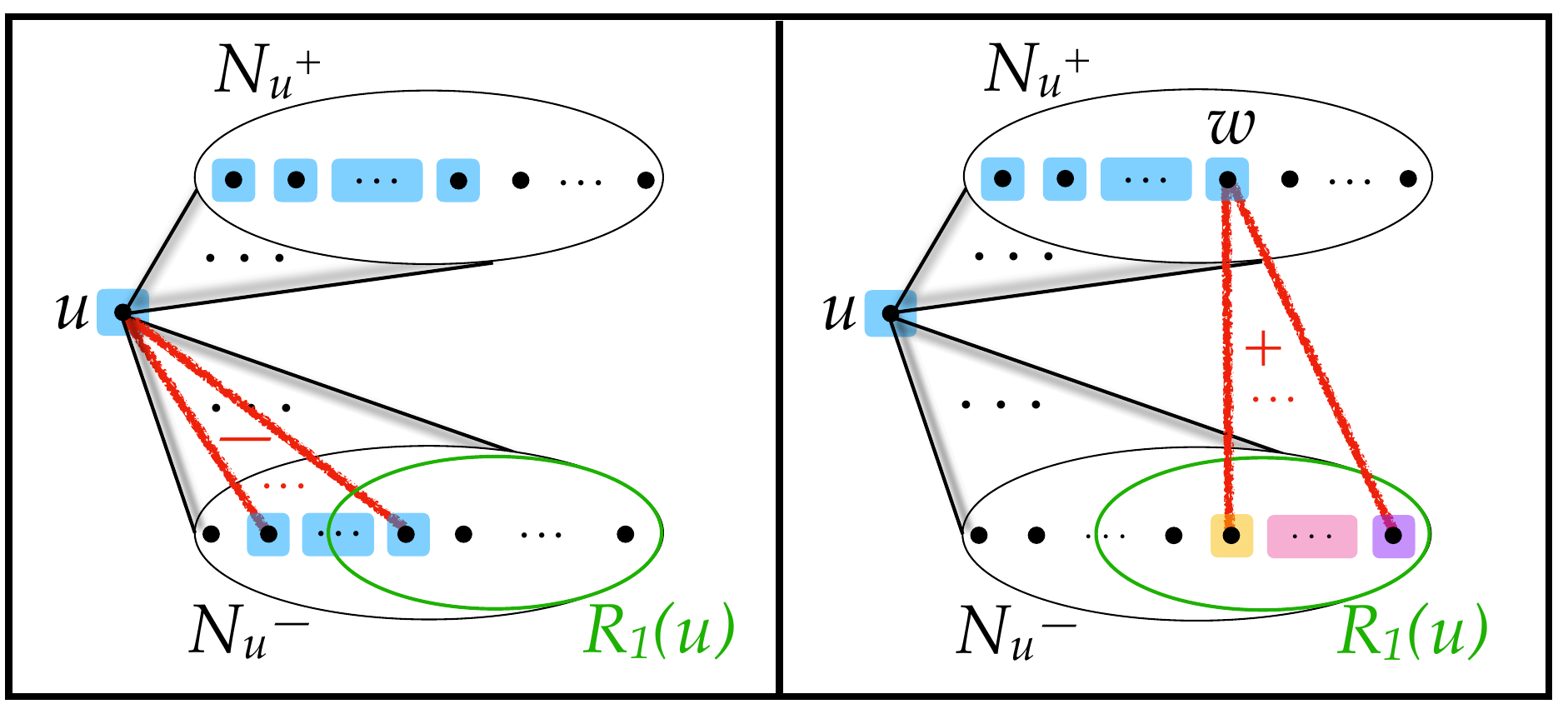}
    \caption{\textbf{Left:} Case 2a(i). For $v \in |N_u^- \cap C(u)|$, $(u,v)$ is a disagreement.
    \textbf{Right:} Case 2a(ii). For $w \in |N_u^+ \cap C(u)|$ and $v \in N_w^+ \cap R_1'(u)$, $(w,v)$ is a disagreement.}
    \label{fig: case2a}
\end{figure}

\noindent\textbf{Case 2a:} \emph{ At least half of $R_1(u)$ is in clusters other than $C(u)$.} \\
We partition into cases (just one more time!) based on the size of $N_u^- \cap C(u)$. See Figure \ref{fig: case2a}.
    \begin{itemize}
        \item \textbf{Case 2a(i):} \emph{At least half of $R_1(u)$ is in clusters other than $C(u)$ and $|N_u^- \cap C(u)| \geq \Delta_u$. }\\
        Let $ u \in  V^{2a(i)}$ be the vertices in this case.
        Note that $y(u) \geq |N_u^- \cap C(u)|$. Then
        \begin{align*}
        \boxed{
        \sum_{ u \in  V^{2ai}} |N_u^+|^p 
        = \sum_{ u \in  V^{2ai}} \Delta_u^p 
        \leq \sum_{ u \in  V^{2ai}} |N_u^- \cap C(u)|^p 
        \leq \sum_{ u \in  V^{2ai}} y(u)^p 
        \leq \textsf{OPT}^p.}
        \end{align*}
        \item \textbf{Case 2a(ii):}\emph{ At least half of $R_1(u)$ is in clusters other than $C(u)$ and $|N_u^- \cap C(u)| \leq \Delta_u$.}\\ 

        Let $ u \in  V^{2a(ii)}$ be the vertices in this case.
    Denote the vertices in $R_1(u)$ that are in clusters other than $C(u)$ by $R_1'(u)$. By definition of Case 2a(ii), $|R_1'(u)| \geq \frac{5}{3} \cdot \Delta_u$. 
        A key fact we will use is that $|C(u)| \leq 2 \cdot \Delta_u$:
        $$|C(u)| = |N_u^- \cap C(u)| + |N_u^+ \cap C(u)| \leq \Delta_u + \Delta_u = 2 \cdot \Delta_u.$$
        For $u \in  V^{2a(ii)}$ and $w \in N_u^+ \cap C(u)$, define 
        $\varphi(u,w) = |R_1'(u) \cap N_w^+|.$

        Each $w \in N_u^+ \cap C(u)$ dispenses 
        $\varphi(u,w)^p/|C(u)|$ charge to $u$. Also, observe that for $v \in R_1'(u)$, we have that $d_{uv} \leq 0.7$, so we know by Proposition \ref{prop: pos_overlap} that $|N_u^+ \cap N_v^+ \cap C(u)| \geq 0.15 \cdot |N_u^+|$. This implies that 
                \begin{align*}\sum_{w \in N_u^+ \cap C(u)} |R_1'(u)  \cap N_w^+|  
                &= \sum_{w \in N_u^+ \cap C(u)} \sum_{v \in R_1'(u)  \cap N_w^+} 1  
                = \sum_{v \in R_1'(u)} \sum_{\underset{ \cap C(u) \cap N_u^+}{w \in  \cap N_v^+}} 1 \\
               & = \sum_{v \in R_1'(u)} | C(u) \cap N_u^+ \cap N_v^+| \geq 
                \sum_{v \in R_1'(u)} 0.15\cdot | N_u^+ | \\
                &= 
                 0.15 \cdot |N_u^+| \cdot |R_1'(u)| \geq 0.15 \cdot \Delta_u \cdot \frac{5}{3}\Delta_u = 0.25 \cdot \Delta_u^2.
                \end{align*}

        First we lower bound the amount of charge each $u$ satisfying Case 2a(ii) receives, which is at least
        \begin{align*}
         \frac{1}{|C(u)|} \sum_{w \in N_u^+ \cap C(u)} \varphi(u,w)^p &\geq \frac{1}{|C(u)|} \cdot \frac{1}{|N_u^+ \cap C(u)|^{p-1}} \cdot \Big(\sum_{w \in N_u^+ \cap C(u)} \varphi(u,w) \Big)^p \\
         &\geq \frac{1}{2\Delta_u} \cdot \frac{1}{\Delta_u^{p-1}} \cdot \Big(0.25\cdot \Delta_u^2 \Big)^p 
         \geq \frac{1}{2} \cdot 0.25^p \cdot |N_u^+|^{p},
         \end{align*}
        where in the first inequality we have applied Jensen's inequality.

    Next we need to upper bound the amount of charge dispensed in total to \textit{all} $u$ satisfying Case 2a(ii). Note by definition that $\varphi(u,w) \leq y(w)$. Each vertex $w \in V$ dispenses at most
    $y(w)^p/|C(u)| = y(w)^p/|C(w)|$
    charge to each $u \in C(w) \cap N_w^+$. So in total each $w$ dispenses at most 
    $|C(w)| \cdot y(w)^p/ |C(w)|  = y(w)^p$
    charge to all $u$ satisfying Case 2a(ii). 

    Now we put together the lower and upper bounds on the total charge dispensed:  
    \begin{align*}
        \sum_{w \in V} y(w)^p &\geq \text{charge dispensed} 
        \geq \sum_{u \in  V^{2a(ii)}} \frac{1}{|C(u)|} \sum_{w \in N_u^+ \cap C(u)} \varphi(u,w)^p 
        \geq \sum_{u \in  V^{2a(ii)}} \frac{1}{2} \cdot 0.25^p \cdot |N_u^+|^p.
    \end{align*}
    \[\text{  In all, } \quad \boxed{\sum_{u \in  V^{2a(ii)}} |N_u^+|^p \leq 2 \cdot 4^p \cdot \sum_{w \in V} y(w)^p \leq 2 \cdot 4^p \cdot \textsf{OPT}^p.}\]

    \end{itemize}

\noindent\textbf{Case 2b: }\emph{At least half of $R_1(u)$ is in $C(u)$.} \\
Let $ u \in  V^{2b}$ be the vertices in this case.
    Denote the vertices in $R_1(u)$ that are in $C(u)$ by $R_1''(u)$. By definition of Case 2b, $|R_1''(u)| \geq \frac{5}{3} \cdot \Delta_u$. Since every vertex in $R''(u)$ is in $N_u^-$, there are at least $|R''(u)|$ disagreements incident to $u$. So
    $y(u) \geq |R''(u)| \geq \frac{5}{3} \cdot \Delta_u,$
     which gives that
    \[ \boxed{\sum_{u \in V^{2b}} |N_u^+|^p = \sum_{u \in V^{2b}} \Delta_u^p \leq \sum_{u \in V^{2b}} y(u)^p \leq \textsf{OPT}^p.}\]
Adding the terms in the boxed expressions across all cases, the proposition follows. 
\end{proof} 
$$
\text{So we have} \quad 
\boxed{S_{11}^+ \leq \sum_{u \in R_1} |N_u^+|^p \leq \left((20/3)^p + 2 + 2 \cdot 4^p \right) \cdot \textsf{OPT}^p.}
$$

Adding together all the cases, we conclude that
$$
    (S^+)^p \leq 2^p \cdot (S_1^+ + S_2^+) 
    \leq 2^p \cdot (S_{11}^+ S_{12^+} + S_2^+) 
    \leq 2^p \cdot [(8^p/2 +1)((20/3)^p + 2 + 2 \cdot 4^p) + 8^p +1 ] \cdot \textsf{OPT}^p.
$$
\end{proof}
\subsubsection{Fractional cost of negative edges in $\ell_p$-norms}
This section bounds the cost of negative edges. The meanings of $\mathcal{C}$, $C(\cdot)$, and $y$ are as in the previous subsection. 
\begin{lemma} \label{lem: lp_bdd-neg}
    For $p \in \mathbb{R}_{\geq 1}$, the fractional cost of the adjusted correlation metric $f$ in the $\ell_p$-norm objective for the set of negative edges is a constant factor approximation to the optimal, i.e.,
   $$(S^-)^p = \sum_{u \in V} \Big(\sum_{v \in N_u^-} (1-f_{uv}) \Big)^p \leq 
2^p ((200/9)^p+1+(10/3)^p+2\cdot (20/3)^p)\cdot \textsf{OPT}^p.
$$
\end{lemma}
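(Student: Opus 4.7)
The plan is to first drastically simplify the inner sum $\sum_{v \in N_u^-} (1 - f_{uv})$ using the definition of $f$, and then do a case analysis on $u$ with respect to a fixed optimal clustering $\mathcal{C}$, using Proposition \ref{prop: pos_overlap} and a charging argument for the hardest case.

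\textbf{Step 1: Reduce to a small set of edges.} By Step 3 of Definition \ref{def: adj_corr_metric}, if $u \in R_1$ (where $R_1$ is as in the proof of Lemma \ref{lem: lp_bdd-pos}) then $f_{uv} = 1$ for all $v \ne u$, so the inner sum vanishes; an analogous observation (using symmetry of $f$) handles the case $v \in R_1$. By Step 2, any $v \in N_u^-$ with $d_{uv} > 0.7$ also contributes $0$. Therefore the only $u$ that matter lie in $R_2 = V \setminus R_1$, and for such $u$ the inner sum equals $\sum_{v \in N_u^- \cap R_2 : d_{uv} \le 0.7} (1 - d_{uv})$, which is in particular bounded by $|N_u^- \cap \{v: d_{uv} \le 0.7\}| \le \tfrac{10}{3}\Delta_u$ by the very definition of $R_2$.

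\textbf{Step 2: Case analysis on $u \in R_2$.} Fix an optimal clustering $\mathcal{C}$ and its disagreement vector $y$. I would split $R_2$ into three cases. \emph{Case A:} fewer than $0.85\,\Delta_u$ of $u$'s positive neighbors lie in $C(u)$, so $y(u) \ge 0.15\,\Delta_u$ and the inner sum is $\le \tfrac{10}{3}\Delta_u \le \tfrac{200}{9} y(u)$. \emph{Case B:} at least $0.85\,\Delta_u$ lie in $C(u)$ and additionally $|N_u^- \cap C(u)| \ge \Delta_u$, so $y(u) \ge \Delta_u$ and the inner sum is $\le \tfrac{10}{3} y(u)$. \emph{Case C:} at least $0.85\,\Delta_u$ lie in $C(u)$ and $|N_u^- \cap C(u)| < \Delta_u$, hence $|C(u)| \le 2\Delta_u$. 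Here I split the close negative neighbors $N_u^- \cap \{d_{uv}\le 0.7\}$ into $B_1(u) := \cdot \cap C(u)$ and $B_2(u) := \cdot \cap \overline{C(u)}$; since every $v \in B_1(u)$ yields a negative disagreement at $u$, $|B_1(u)| \le y(u)$, and the inner sum is at most $|B_1(u)| + |B_2(u)| \le y(u) + |B_2(u)|$.

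\textbf{Step 3: Charging argument to bound $\sum_{u} |B_2(u)|^p$.} This is the main technical obstacle, and it mirrors Case 2a(ii) of Lemma \ref{lem: lp_bdd-pos}. For $u$ in Case C and $w \in N_u^+ \cap C(u)$, set $\varphi(u,w) = |B_2(u) \cap N_w^+|$. By Proposition \ref{prop: pos_overlap}, every $v \in B_2(u)$ contributes at least $0.15\,|N_u^+|$ to $\sum_w \varphi(u,w)$, so $\sum_w \varphi(u,w) \ge 0.15\,\Delta_u\,|B_2(u)|$. Jensen's inequality (applied over $|N_u^+ \cap C(u)| \le \Delta_u$ terms) gives $\sum_w \varphi(u,w)^p \ge 0.15^p\,\Delta_u\,|B_2(u)|^p$. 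Each such $u$ thus receives charge at least $\tfrac{1}{|C(u)|}\sum_w \varphi(u,w)^p \ge \tfrac{0.15^p}{2}|B_2(u)|^p$ using $|C(u)|\le 2\Delta_u$. On the dispensing side, $\varphi(u,w) \le y(w)$ for all $u \in C(w) \cap N_w^+$, so each $w$ dispenses at most $|C(w)| \cdot y(w)^p/|C(w)| = y(w)^p$ in total. Equating and rearranging yields $\sum_{u \in \text{Case C}} |B_2(u)|^p \le 2\,(20/3)^p\,\textsf{OPT}^p$.

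\textbf{Step 4: Assemble.} Applying $(a+b)^p \le 2^p(a^p+b^p)$ in Case C and summing across the three cases gives
\[
(S^-)^p \le (200/9)^p\,\textsf{OPT}^p + (10/3)^p\,\textsf{OPT}^p + 2^p\,\textsf{OPT}^p + 2^p\cdot 2(20/3)^p\,\textsf{OPT}^p,
\]
which is dominated by $2^p\bigl((200/9)^p + 1 + (10/3)^p + 2(20/3)^p\bigr)\textsf{OPT}^p$. The main obstacle is Step 3: the $B_2(u)$ neighbors can be arbitrarily far from $u$ in the clustering, so a direct per-vertex argument fails; the double counting via $\varphi$, combined with Jensen's, is what transfers the cost of these ``hidden'' negative disagreements to positive disagreements of vertices in $C(u)$.
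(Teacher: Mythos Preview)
Your proof is correct and follows essentially the same strategy as the paper: restrict to $u \in R_2$, bound the inner sum by $|N_u^- \cap \{v:d_{uv}\le 0.7\}| \le \tfrac{10}{3}\Delta_u$, carry out the same three-way case split on $u$ relative to an optimal clustering, and in the hard case run the identical charging argument via $\varphi(u,w) = |B_2(u)\cap N_w^+|$ combined with Jensen and $|C(u)|\le 2\Delta_u$. The only cosmetic difference is that the paper peels off the $v \in C(u)$ contribution (your $B_1$) up front as a separate term $S_1^-$ before doing the case analysis, whereas you handle it inside Case~C; this reordering in fact gives you a slightly tighter constant than the stated bound.
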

\begin{proof}
We have 
\begin{align*}
    (S^-)^p &= \sum_{u \in V} \Big(\sum_{v \in N_u^-} (1-f_{uv}) \Big)^p 
    \leq 2^p \underbrace{\sum_{u \in V} \Big(\sum_{v \in N_u^- \cap C(u)} (1-f_{uv}) \Big)^p}_{S_1^-} + 2^p \underbrace{\sum_{u \in V} \Big(\sum_{v \in N_u^- \cap \overline{C(u)}} (1-f_{uv}) \Big)^p}_{S_2^-}.
\end{align*}

It is easy to bound $S_1^-$ by using the trivial upper bound $1-f_{uv} \leq 1$:

\[\boxed{S_1^- = \sum_{u \in V} \Big(\sum_{v \in N_u^- \cap C(u)} (1-f_{uv}) \Big)^p \leq \sum_{u \in V} \Big(\sum_{v \in N_u^- \cap C(u)} 1 \Big)^p \leq \sum_{u \in V} y(u)^p = \textsf{OPT}^p,}\]
where we have used that every edge $(u,v) \in E^-$ with $v \in C(u)$ is a disagreement incident to $u$. Next, we bound $S_2^-$. Let $R_1$ and $R_2$ be as in the previous subsection: 
$R_1 = \{u : |N_u^- \cap \{v: d_{uv} \leq 0.7\}| \geq \frac{10}{3} \cdot \Delta_u\} $
and $R_2 = V \setminus R_1$. For $u \in R_2$, define 
$$V_u = \{v: v \in N_u^- \cap \overline{C(u)}, d_{uv} \leq 0.7\}.$$
Note that the definition of $V_u$ is the same as $R'_1(u)$ in the previous subsection, 
but here $V_u$ is only defined for $u \in R_2$, while $R'_1(u)$ was defined for $u \in R_1$.
For $u \in R_1$, we have $1-f_{uv} = 0$ for every $v \in V \setminus \{u\}$. So the outer sum in $S_2^-$ only need be taken over $u \in R_2$:
\begin{align*}
    S_2^- &= \sum_{u \in R_2} \Big(\sum_{v \in N_u^- \cap \overline{C(u)}} (1-f_{uv}) \Big)^p 
    \leq \sum_{u \in R_2} \Big(\sum_{\substack{v: v \in N_u^- \cap \overline{C(u)},\\ d_{uv} \leq 0.7} } (1-d_{uv}) \Big)^p 
    \leq \sum_{u \in R_2} |V_u|^p
\end{align*}
In the second equality, we have used that if $u \in R_2$ and $v \in N_u^-$, then $f_{uv} = d_{uv}$, unless $f_{uv}$ was rounded up to 1 in Step 2 of Definition \ref{def: adj_corr_metric} (which happens when $d_{uv} > 0.7$), or $f_{uv}$ was rounded up to 1 in Step 3 (in which case $1-f_{uv} = 0 \leq 1-{d_{uv}}$).

A key observation is that since $u \in R_2$, it is the case that $|V_u| \leq \frac{10}{3} \cdot \Delta_u$. 

\medskip 

Fix a vertex $u \in R_2$. We consider a few cases. 
\paragraph{Case 1: }\emph{At least a 0.15 fraction of $N_u^+$ is in clusters other than $C(u)$.}\\
Define $V^1$ to be the set of $u \in R_2$ that satisfy Case 1. Then for $u \in V^1$,  $0.15 \cdot |N_u^+| \leq y(u)$, and further
$$|V_u| \leq \frac{10}{3}\Delta_u \leq \frac{1}{0.15}\cdot \frac{10}{3} y(u) = \frac{200}{9}y(u).$$
\text{It follows that } 
$$ \boxed{\sum_{u \in V^1} |V_u|^p \leq (200/9)^p \cdot \sum_{u \in V^1} y(u)^p \leq (200/9)^p \cdot \textsf{OPT}^p.}$$

\paragraph{Case 2: }\emph{At least a 0.85 fraction of $N_u^+$ is in $C(u)$. }\\ 
Define $V^2$ to be the set of $u \in R_2$ that satisfy Case 2. 
Fix $u \in V^2$ and $v \in V_u$. Define $N_{u,v} = N_u^+ \cap N_v^+ \cap C(u)$. Since $d_{uv} \leq 0.7$ and by the assumption of this case, using Proposition \ref{prop: pos_overlap} we have  
$$|N_{u,v}| = |N_u^+ \cap N_v^+ \cap C(u)| \geq 0.15 \cdot \Delta_u.$$
Observe that since $v \not 
\in C(u)$ for $v \in V_u$, $(v,w)$ is a (positive) disagreement for all $w \in N_{u,v}$. 

\noindent  \textbf{Case 2a: }\emph{$|N_u^- \cap C(u)| \geq \Delta_u$.} 

    Define $V^{2a}$ to be the set of $u \in V^2$ that satisfy Case 2a. Since all edges $(u,v)$ with $v \in N_u^- \cap C(u)$ are disagreements, we have $y(u) \geq \Delta_u$. Recalling that $|V_u| \leq \frac{10}{3} \cdot \Delta_u$ for $u \in R_2$, we have
    \[\boxed{\sum_{u \in V^{2a}} |V_u|^p \leq \sum_{u \in V^{2a}} (10/3\cdot\Delta_u)^p \leq (10/3)^p \cdot \sum_{u \in V^{2a}} y(u)^p \leq (10/3)^p \cdot \textsf{OPT}^p.}\]

\noindent \textbf{Case 2b: }\emph{$|C(u)| \leq 2\Delta_u$.} 

    Define $V^{2b}$ to be the $u \in V^2$ satisfying Case 2b. 
    Fix $w \in N_u^+ \cap C(u)$ and $u \in V^{2b}$. Define 
    $$\varphi(u,w) = |V_u \cap N_w^+|,$$
    i.e. $\varphi(u,w)$ is the number of $v \in V_u$ with $w \in N_{u,v}$. 
    Each $w \in N_u^+ \cap C(u)$ dispenses 
    $\frac{\varphi(u,w)^p}{|C(u)|}$
    charge to $u$. Also, 
    \begin{align*}\sum_{w \in N_u^+ \cap C(u)} \varphi(u,w) &= 
    \sum_{w \in N_u^+ \cap C(u)} |V_u \cap N_w^+|
    = \sum_{w \in N_u^+ \cap C(u)} \sum_{v \in V_u\cap N_w^+}1 \\
    &= \sum_{v \in V_u} \sum_{w \in N_{u,v}}1 
    =\sum_{v \in V_u}  |N_{u,v}| \geq |V_u| \cdot 0.15 \cdot \Delta_u.
    \end{align*}
    
    Now we can lower bound the amount of charge each $u$ satisfying Case 2b receives, which is at least
    \begin{align*}
    \frac{1}{|C(u)|} \sum_{w \in N_u^+ \cap C(u)} \varphi(u,w)^p &\geq \frac{1}{|C(u)|} \cdot \frac{1}{|N_u^+ \cap C(u)|^{p-1}} \Big(\sum_{w \in N_u^+ \cap C(u)} \varphi(u,w) \Big)^p \\
    &\geq \frac{1}{2\Delta_u} \cdot \frac{1}{\Delta_u^{p-1}} \left(|V_u| \cdot 0.15\cdot \Delta_u \right)^p 
    = \frac{1}{2} \cdot 0.15^p \cdot |V_u|^p,
    \end{align*}
    where in the first line we used Jensen's inequality. 

    To upper bound the amount of charge dispensed in total to \textit{all} $u$ satisfying Case 2b, first note that $\varphi(u,w) \leq y(w)$. Also, each vertex $w \in V$ only distributes charge to $u \in C(w) \cap N_w^+$, and the amount of charge distributed to each such $u$ is 
    $$\frac{\varphi(u,w)^p}{|C(u)|} = \frac{\varphi(u,w)^p}{|C(w)|} \leq \frac{y(w)^p}{|C(w)|},$$
    so that in total each $w$ dispenses at most 
    $\frac{y(w)^p}{|C(w)|} \cdot |C(w)| \leq y(w)^p$ 
 charge. Putting together the lower and upper bounds on the amount of charge dispensed:
 \begin{align*}
     \sum_{w \in V} y(w)^p &\geq \text{total charge dispensed} 
     \geq \sum_{u \in V^{2b}} \frac{1}{|C(u)|} \sum_{w \in N_u^+ \cap C(u)} \varphi(u,w)^p 
     \geq \sum_{u \in V^{2b}} \frac{1}{2} \cdot 0.15^p \cdot |V_u|^p.
 \end{align*}
 So in all,
 \[\boxed{\sum_{u \in V^{2b}} |V_u|^p \leq 2 \cdot (20/3)^p \cdot \sum_{w \in V} y(w)^p = 2 \cdot (20/3)^p \cdot \textsf{OPT}^p.}\]
Adding together all the cases, we see that
$ (S^-)^p\leq 
2^p ((200/9)^p+1+(10/3)^p+2\cdot (20/3)^p)\cdot \textsf{OPT}^p.
$
 \end{proof}

 \subsection{Proofs of Theorem \ref{thm: main-acm} and Corollary {\ref{cor: main-acm-sparse}}} \label{sec: thm1_proof}

Here we show that Theorem \ref{thm: main-acm} and Corollary \ref{cor: main-acm-sparse} follow directly from the preceding lemmas.

\begin{proof}[Proof of Theorem \ref{thm: main-acm}]
First we show that Lemma \ref{lem:lp-bdd} implies that the clustering resulting from inputting $f$ into the KMZ rounding algorithm is $O(1)$-approximate in any $\ell_p$-norm. Since the rounding algorithm does not depend on $p$, the clustering will be the same for all $p$. 
Let $\mathcal{C^*}$ be the clustering produced by running the KMZ rounding algorithm with the adjusted correlation metric $f$ as input. Let $\textsf{ALG}(u)$ be the number of edges incident to $u$ that are disagreements with respect to $\mathcal{C}^*$. From \cite{KMZ19} and Lemmas \ref{lem: adjusted_tri_ineq} and \ref{lem: apx-tri}, we have that for every $u \in V$,
$\textsf{ALG}(u) \leq 12 \cdot y_u$
where $y_u$ is as in LP \ref{KMZ_LP} when taking $x=f$. 
So $||y||_p$ is the fractional cost of $f$ in the $\ell_p$-norm. 
In what follows, $||\textsf{ALG}||_p$ is the objective value of $\mathcal{C}^*$ in the $\ell_p$-norm and $\textsf{OPT}(p)$ is the optimal objective value in the $\ell_p$-norm. 
Thus using Lemma \ref{lem:lp-bdd} in the last inequality, we have 
\[||\textsf{ALG}||_p \leq 12 \cdot ||y||_p \leq 12 \cdot 529 \cdot \textsf{OPT}(p) = 6348 \cdot \textsf{OPT}(p).\]

We show the overall run-time is $O(n^\omega)$. Recall from the analysis in \cite{DMN23} that computing the correlation metric takes time $O(n^\omega)$, and the KMZ rounding algorithm takes time $O(n^2)$. We just have to show the post-processing of $d$ in Steps 2 and 3 of Definition \ref{def: adj_corr_metric} that were done in order to obtain the adjusted correlation metric $f$ can be done quickly. Indeed, Step 2 takes $O(n^2)$ time as it simply iterates through the edges. Step 3 also takes $O(n^2)$ time, since it visits each vertex and iterates through the neighbors. Thus, the run-time remains $O(n^\omega)$. 
\end{proof}

\begin{proof}[Proof of Corollary \ref{cor: main-acm-sparse}]
    In \cite{DMN23}, they observe that to reduce the run-time from $O(n^\omega)$ to $O(n\Delta^2 \log n)$ for graphs with maximum positive degree bounded by $\Delta$, one can compute the correlation metric $d$ in $O(n\Delta^2)$ time and then run the KMZ rounding algorithm in $O(n\Delta^2 \log n)$ time (see the proof of Corollary 1.2 in Appendix D of \cite{DMN23}). We need only compute $f_{uv}$ when $f_{uv}<1$. Otherwise, we can handle $f_{uv} = 1$ implicitly. As in \cite{DMN23}, for each $u \in V$, we can maintain a list of  $d_{uv}$ for all $v$ with $d_{uv} < 1$. Computing these lists takes $O(n\Delta^2)$ time in total. (This is because there are at most $\Delta^2$ vertices $v$ that are distance two away from $u$ in the positive subgraph.) Steps 2 and 3 only prune these lists further, since some distances that are below 1 may be raised to 1; importantly, no distance that is already equal to 1 under $d$ will be reduced in Steps 2 and 3. To carry out Step 2, for each vertex $u$ it takes $O(\Delta^2)$ time to iterate through the list for $u$ and raise the appropriate $d_{uv}$ to 1. Similarly, to carry out Step 3, for each vertex $u$ it takes $O(\Delta^2)$ time to determine whether the condition in Step 3 is satisfied; if it is, we just handle the vertex $u$ implicitly, as all distances $f_{uv}$ are set to 1. 

    Since each vertex's list still has size at most $\Delta^2$ after the post-processing in Steps 2 and 3, the KMZ rounding algorithm with $f$ as input takes time $O(n\Delta^2 \log n)$, by the exact same argument as in \cite{DMN23}.  
\end{proof}

\section{Conclusion}

This paper considered correlation clustering on unweighted, complete graphs, 
a problem that arises in many settings including community detection and the study of large networks.
All previous works that study minimizing the $\ell_p$-norm (for $p \in \mathbb{R}_{>1}$) of the disagreement vector rely on solving a large, convex relaxation (which is costly to the algorithm's run-time) and produce a solution that is only $O(1)$-approximate for one specific value of $p$. 
We innovate upon this rich line of work by 
(1) giving the first combinatorial algorithm for the $\ell_p$-norms for $p \in \mathbb{R}_{>1}$, 
(2) designing scalable algorithms for this practical problem, and 
(3) obtaining solutions that are $O(1)$-approximate for all $\ell_p$-norms (for $p \in \mathbb{R}_{\geq1} \cup \{\infty\}$) simultaneously.  
We note this last point is particularly important, 
as such solutions are good in both global and local senses, 
and thus may be more desirable than typical optimal or approximate solutions for correlation clustering. The existence of these solutions reveals a surprising structural property of correlation clustering.

One question is whether there is a simpler existential (not necessarily algorithmic) proof that there exists an $O(1)$-approximation for the all-norm objective for correlation clustering.

It is also of interest to implement the KMZ algorithm with the adjusted correlation metric as input, and empirically gain an understanding of how good the adjusted correlation metric is for different $\ell_p$-norms. We suspect that our analysis is lossy (for instance, we did not attempt to optimize constants), and that the approximation obtained would be of much better quality than our analysis guarantees.

Finally, it would be interesting if an analogous result can be obtained for weighted correlation clustering.  

\newpage 

\printbibliography

\newpage 

\appendix

\section{Triangle Inequality} \label{sec: tri_ineq}

Below is the proof that the adjusted correlation metric $f$ satisfies an approximate triangle inequality. 

\begin{proof}[Proof of Lemma \ref{lem: adjusted_tri_ineq}]
    Note that after Step 1 of Definition \ref{def: adj_corr_metric}, $f$ satisfies the triangle inequality, since $f=d$ here and for any $u,v,w$, $d_{uv} \leq d_{uw}+d_{vw}.$
    We then consider the execution of Step 2 in Definition \ref{def: adj_corr_metric}.
    The triangle inequality on the adjusted correlation metric holds if either of $(u,w)$ or $(w,v)$ have distance 1 with respect to the adjusted correlation metric, and it also clearly holds if the three edges have the same distance as in the correlation metric. 
    It follows that the only case to check is when $f_{uv}=1$, $f_{uw}=d_{uw}$, and $f_{wv}=d_{wv}$.
    If $f_{uv}=1$ at the end of Step 2, by the definition of $f$ it must be that $d_{uv} > 0.7$.
    Since $d_{uv} \leq d_{uw}+d_{vw}$, we have that $0.7 \leq d_{uw}+d_{vw}$, 
    and so $f_{uv}=1 \leq \frac{10}{7}(d_{uw}+d_{vw}) \leq \frac{10}{7}(f_{uw}+f_{vw})$ at the end of Step 2.
    
    Finally, Step 3 of Definition \ref{def: adj_corr_metric} also preserves this approximate triangle inequality. 
    The $f$ values on the edges leaving a vertex are raised to 1 together, and any triangle involving one of these edges necessarily involves another.
\end{proof}

\section{$\ell_{\infty}$-norm Approximation} \label{sec: infty_norm}
Below we include the proof that the adjusted correlation metric $f$ has bounded fractional cost in the $\ell_{\infty}$-norm. 

\begin{proof}[Proof of Lemma \ref{lem: infty_proof}]
    The fractional cost of the correlation metric $d$ is at most $8 \cdot \textsf{OPT}$ by \cite{DMN23}. That is,
    \begin{equation} \label{eq: frac_cost_infty}
        \sum_{v \in N_u^+} d_{uv} + \sum_{v \in N_u^-} (1-d_{uv}) \leq 8 \cdot \textsf{OPT} \hspace{0.3cm} \text{for all } u \in V.
    \end{equation}

    Consider any $u \in V$. Observe that the second sum in (\ref{eq: frac_cost_infty}) can only decrease if Step 2 in Definition \ref{def: adj_corr_metric} is executed for any of the negative edges incident to $u$. Now let $R_1$ denote the set of vertices $u$ such that Step 3 is executed on $u$. For $u \in R_1$, the  second sum decreases by at least $\frac{3}{10} \cdot \frac{10}{3} \cdot \Delta_u = \Delta_u $, since there are at least $\frac{10}{3}\cdot \Delta_u$ negative edges $(u,v)$ with $1-d_{uv} \geq \frac{3}{10}$, and these edges are all rounded up  to 1 in $f$, so $1-d_{uv}$ is rounded down to 0. On the other hand, the first sum increases by at most $\Delta_u$, since there are $\Delta_u$ positive edges incident to $u$ and each gets rounded up to 1 in $f$. Therefore, we know that (\ref{eq: frac_cost_infty}) does not increase when we replace $d$ with $f$ when $u \in R_1$. 

Finally, it is possible that (\ref{eq: frac_cost_infty}) increases even when $u \not \in R_1$. This is because $u$ may have neighbors $v$ such that $v \in R_1$. For such $v$, $d_{uv}$ is raised to $f_{uv}=1$. This is only better for the second sum in (\ref{eq: frac_cost_infty}). Now we consider what happens to the first sum in (\ref{eq: frac_cost_infty}) when we replace $d$ with $f$. Let $R_2 = V \setminus R_1$. Fix $u \in R_2$. Then
\begin{align*}
    \sum_{v \in N_u^+} f_{uv} &= \sum_{v \in N_u^+ \cap R_1} 1 + \sum_{v \in N_u^+ \cap R_2} d_{uv} \\
    &= \sum_{\substack{v \in N_u^+ \cap R_1 \\ d_{uv} \leq 1/4}} 1 + \sum_{\substack{v \in N_u^+ \cap R_1 \\ d_{uv} > 1/4}} 1 + \sum_{v \in N_u^+ \cap R_2} d_{uv} \\
    &\leq \sum_{\substack{v \in N_u^+ \cap R_1 \\ d_{uv} \leq 1/4}} 1 + \sum_{\substack{v \in N_u^+ \cap R_1 \\ d_{uv} > 1/4}} 4 \cdot d_{uv} + \sum_{v \in N_u^+ \cap R_2} d_{uv} \\
    &\leq \sum_{\substack{v \in N_u^+ \cap R_1 \\ d_{uv} \leq 1/4}} 1 + 4 \cdot \sum_{v \in N_u^+} d_{uv}
\end{align*}
The second term above is bounded by $32 \cdot \textsf{OPT}$ by (\ref{eq: frac_cost_infty}). To bound the first term, let $w^*$ be an arbitrary vertex in $N_u^+ \cap R_1$ (if this set is empty, we just ignore the first term). Since $d_{uw^*} \leq 1/4$, we know by Proposition \ref{prop: similar_nbhds} that $\Delta_u \leq \frac{7}{3} \cdot \Delta_{w^*}$. Also, 
$$ \sum_{v \in N_{w^*}^+} f_{w^*v} + \sum_{v \in N_{w^*}^-} (1-f_{w^*v})  \leq \Delta_w^*$$ 
since $w^* \in R_1$, and by the argument in the previous paragraph we have that $\Delta_{w^*} \leq 8 \cdot \textsf{OPT}$. So
\[\sum_{\substack{v \in N_u^+ \cap R_1 \\ d_{uv} \leq 1/4}} 1 \leq \Delta_u \leq \frac{7}{3} \cdot \Delta_{w^*} \leq \frac{7}{3} \cdot 8 \cdot \textsf{OPT}.\] This concludes the proof. 
\end{proof}

\section{$\ell_1$-norm Approximation}\label{sec:l1-norm}
The next lemma proves that the adjusted correlation metric well approximates the optimal in the $\ell_1$-norm.
We will show that the fractional cost of the adjusted correlation metric $f$ in the $\ell_1$-norm is a constant factor approximation to the optimal, i.e., 
    $$\sum_{u \in V} \sum_{v \in N_u^+} f_{uv} + \sum_{u \in V} \sum_{v \in N_u^-} (1-f_{uv}) = O(\textsf{OPT}).$$
We argue the second sum---over the negative edges---has a constant factor approximation in the first lemma (Lemma \ref{lem: l1_bdd-neg}), then show the first sum---over the positive edges---does too in the second lemma (Lemma \ref{lem: l1_bdd-pos}). 
Combining these lemmas gives the following approximation guarantee for the $\ell_1$-norm.

\begin{lemma}
    The fractional cost of the adjusted correlation metric $f$ in the $\ell_1$-norm objective is a constant factor away from the cost of the optimal integral solution in the $\ell_1$-norm
\end{lemma}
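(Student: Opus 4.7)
The plan is to mirror the structure of the general $\ell_p$-norm argument (Lemma \ref{lem:lp-bdd}) but specialize to $p=1$ where summing edge-by-edge avoids the asymmetry and Jensen's inequality steps. I would split the fractional cost as
\[
\text{cost}(f) \;=\; \underbrace{\sum_{u \in V}\sum_{v \in N_u^+} f_{uv}}_{S^+} \;+\; \underbrace{\sum_{u \in V}\sum_{v \in N_u^-}(1-f_{uv})}_{S^-},
\]
and bound $S^+$ and $S^-$ separately by $O(\textsf{OPT})$.

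For $S^-$, I would first peel off the contribution from $v \in N_u^- \cap C(u)$: these edges are already disagreements, so the sum is at most $\sum_u y(u) = \textsf{OPT}$. For $v \in N_u^- \cap \overline{C(u)}$, note that if $u \in R_1$ then $f_{uv}=1$ (zero contribution); if $u \in R_2$ and $d_{uv} > 0.7$, then Step 2 sets $f_{uv}=1$ (again zero contribution). So the remaining contribution is bounded by $\sum_{u \in R_2} |V_u|$, where $V_u = \{v \in N_u^- \cap \overline{C(u)} : d_{uv} \leq 0.7\}$ and the definition of $R_2$ gives $|V_u| \leq \tfrac{10}{3}\Delta_u$. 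I would then run the same three-case split used in Lemma \ref{lem: lp_bdd-neg}: (i) if a $0.15$-fraction of $N_u^+$ is separated from $u$ then $y(u) \geq 0.15\,\Delta_u$ and $|V_u|$ is absorbed into $y(u)$; (ii) if $|N_u^- \cap C(u)| \geq \Delta_u$ then $y(u) \geq \Delta_u$ works; (iii) otherwise $|C(u)|\leq 2\Delta_u$ and I run the charging argument via $\varphi(u,w) = |V_u \cap N_w^+|$, using Proposition \ref{prop: pos_overlap} to guarantee $\sum_{w \in N_u^+ \cap C(u)}\varphi(u,w) \geq 0.15\,\Delta_u|V_u|$, and the fact that each $w$ dispenses total charge at most $y(w)$ across all $u$ to conclude $\sum_{u}|V_u| = O(\textsf{OPT})$. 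The $p=1$ charging is a direct double count, with no Jensen step.

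For $S^+$, I would split on whether $u \in R_1$ or $u \in R_2$ (and, for $u \in R_2$, on whether the neighbor $v$ lies in $R_1$ or $R_2$):
\[
S^+ \;=\; \sum_{u \in R_1}\Delta_u \;+\; \sum_{u \in R_2}|N_u^+ \cap R_1| \;+\; \sum_{u \in R_2}\sum_{v \in N_u^+ \cap R_2} d_{uv}.
\]
The middle term is bounded by $\sum_{v \in R_1}\Delta_v$ after swapping the order of summation, so both of the first two terms reduce to bounding $\sum_{u \in R_1}\Delta_u$. The third term is at most $\sum_{u \in V}\sum_{v \in N_u^+} d_{uv}$, which is $O(\textsf{OPT})$ by Claim \ref{claim: l1-pos-cm} (the bound on the correlation metric's fractional cost on positive edges). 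To bound $\sum_{u \in R_1}\Delta_u$, I would run the same case analysis as in Proposition \ref{prop: R1_bounding}, again simpler at $p=1$: Cases 1, 2a(i), and 2b each directly give $\Delta_u = O(y(u))$; Case 2a(ii) uses the same $\varphi$ charging argument (now $\varphi(u,w) = |R_1'(u) \cap N_w^+|$, with $\sum_w \varphi(u,w) \geq 0.25\,\Delta_u^2$ and $|C(u)| \leq 2\Delta_u$) to get $\Delta_u$'s absorbed into $\sum_w y(w) = \textsf{OPT}$.

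The main obstacle, as in the general case, is Case 2a(ii) in bounding $\sum_{u \in R_1}\Delta_u$: it is precisely the setting where $u$ has small positive degree relative to its cluster but huge nearby negative neighborhood, and where neither a local disagreement count nor a direct appeal to the correlation metric is available. Fortunately, the $p=1$ version of the charging argument reduces to a clean double count of positive disagreements of the form $(w,v)$ with $w \in N_u^+ \cap C(u)$ and $v \in R_1'(u) \cap N_w^+$, and the $|C(u)| \leq 2\Delta_u$ bound ensures no vertex $w$ is overcharged. Putting $S^+$ and $S^-$ together yields $\text{cost}(f) = O(\textsf{OPT})$, with an explicit constant (approximately $74$ with a careful accounting, matching the remark in the excerpt).
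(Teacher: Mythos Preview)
Your proposal is correct and follows essentially the same approach as the paper's own $\ell_1$ proof in Appendix~\ref{sec:l1-norm}: the negative-edge argument (peeling off $v\in C(u)$, then the three-case split with the $\varphi$ charging) matches Lemma~\ref{lem: l1_bdd-neg} exactly, and your $R_1/R_2$ decomposition of $S^+$ is a mild repackaging of the paper's ``$f_e=d_e$ versus raised edges'' split, reducing (via the same swap-of-summation trick) to Claim~\ref{claim: l1-pos-cm} plus the case analysis on $\sum_{u\in R_1}\Delta_u$. Your observation that the bipartite auxiliary graph is unnecessary at $p=1$ is also exactly how the paper simplifies, and your constant of $74$ matches.
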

\begin{proof}
The fractional cost in the $\ell_1$-norm is
     $$\sum_{u \in V} \sum_{v \in N_u^+} f_{uv} + \sum_{u \in V} \sum_{v \in N_u^-} (1-f_{uv}) \leq 74 \cdot \textsf{OPT}.$$
The first sum is bounded by $34\cdot \textsf{OPT}$ by Lemma \ref{lem: l1_bdd-pos} and the second sum is bounded by $40\cdot \textsf{OPT}$ by Lemma \ref{lem: l1_bdd-neg}. 
\end{proof}

\begin{lemma} \label{lem: l1_bdd-pos}
    The fractional cost of the adjusted correlation metric $f$ in the $\ell_1$-norm objective for the set of positive edges is a constant factor approximation to the optimal, i.e.,
    $$ \sum_{u \in V} \sum_{v \in N_u^+} f_{uv} = 34 \cdot \textsf{OPT}.$$
\end{lemma}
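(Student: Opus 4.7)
The plan is to mirror the structure of Lemma \ref{lem: lp_bdd-pos}, exploiting the simplifications that arise at $p=1$: linearity of the sum over edges eliminates Jensen's inequality, and the double-counting argument on the auxiliary bipartite graph collapses to a triviality. First I would split the positive-edge cost according to whether the other endpoint lies in $C(u)$:
\begin{equation*}
\sum_{u \in V}\sum_{v \in N_u^+} f_{uv} \;\leq\; \underbrace{\sum_{u \in V}\sum_{v \in N_u^+ \cap \overline{C(u)}} 1}_{\leq\, \textsf{OPT}} \;+\; \underbrace{\sum_{u \in V}\sum_{v \in N_u^+ \cap C(u)} f_{uv}}_{(\star)},
\end{equation*}
using the trivial bound $f_{uv} \leq 1$ on the crossing edges and noting that each such edge is a positive disagreement. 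The work is in bounding $(\star)$.

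For $(\star)$, I would partition the outer sum by whether $u \in R_1$ (the vertices made singletons by Step 3 of Definition \ref{def: adj_corr_metric}) or $u \in R_2 = V \setminus R_1$. The $R_1$ contribution is at most $\sum_{u \in R_1} \Delta_u$, which I plan to bound by $O(\textsf{OPT})$ via the $p=1$ analog of Proposition \ref{prop: R1_bounding}. For $u \in R_2$, if $v \in R_2$ then $f_{uv}=d_{uv}$, and these terms aggregate to at most the fractional cost of $d$ restricted to positive edges, which is $O(\textsf{OPT})$ by Claim \ref{claim: l1-pos-cm}. If instead $v \in R_1$, then $f_{uv}=1$, and I split once more on $d_{uv}$: when $d_{uv} > 1/4$ I charge $1 \leq 4d_{uv}$ to the positive-edge cost of $d$, and when $d_{uv} \leq 1/4$ I form the bipartite auxiliary graph $H=(R_2, R_1, F)$ exactly as in the $\ell_p$ proof, where $(u,v)\in F$ iff $uv \in E^+$ and $d_{uv}\leq 1/4$. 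The desired bound then reduces to
\begin{equation*}
\sum_{u \in R_2} \deg_H(u) \;=\; |F| \;=\; \sum_{v \in R_1} \deg_H(v) \;\leq\; \sum_{v \in R_1} |N_v^+|,
\end{equation*}
which again falls under the $p=1$ version of Proposition \ref{prop: R1_bounding}. This is the step where the general-$p$ argument needed the intricate double-counting with weights $(\deg_H(u)+\deg_H(v))^{p-1}$; at $p=1$ it is simply counting $|F|$ two ways.

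The main obstacle is therefore the $p=1$ analog of Proposition \ref{prop: R1_bounding}, i.e., showing $\sum_{u \in R_1} \Delta_u = O(\textsf{OPT})$. I would follow the same case analysis as in the general-$p$ proof: Case 1 (at least $0.15\cdot|N_u^+|$ of $N_u^+$ lies outside $C(u)$) gives $\Delta_u \leq \tfrac{1}{0.15} y(u)$ directly; Case 2a(i) ($|N_u^- \cap C(u)| \geq \Delta_u$) gives $\Delta_u \leq y(u)$ immediately; Case 2b (at least half of $R_1(u)$ lies inside $C(u)$) gives $y(u) \geq \tfrac{5}{3}\Delta_u$. The crux is Case 2a(ii), where I again define $\varphi(u,w) = |R_1'(u) \cap N_w^+|$ for $w \in N_u^+ \cap C(u)$ and use the fact that $d_{uv}\leq 0.7$ together with Proposition \ref{prop: pos_overlap} to deduce $\sum_{w \in N_u^+ \cap C(u)} \varphi(u,w) \geq 0.25\,\Delta_u^2$. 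At $p=1$ the averaging step becomes $\tfrac{1}{|C(u)|}\sum_{w} \varphi(u,w) \geq \tfrac{0.25\,\Delta_u^2}{2\Delta_u} = \tfrac{\Delta_u}{8}$, so a charge of $\Delta_u/8$ per $u$ suffices, and each $w\in V$ dispenses at most $y(w)$ total charge because $\varphi(u,w)\leq y(w)$ and $w$ only charges $u \in C(w)\cap N_w^+$.

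Assembling the contributions from the four cases and combining with the bound on $d$ restricted to positive edges yields $(\star) = O(\textsf{OPT})$, and a careful accounting of the constants (using the thresholds $0.15$, $0.7$, $1/4$, and $10/3$) gives the stated bound of $34 \cdot \textsf{OPT}$.
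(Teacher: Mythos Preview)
Your proposal is correct and substantively equivalent to the paper's proof, but the organization differs in one place worth noting. You follow the general $\ell_p$ template (Lemma~\ref{lem: lp_bdd-pos}) and specialize to $p=1$: you split $u\in R_2$ by whether $v\in R_1$ or $v\in R_2$, then further split the $v\in R_1$ case on $d_{uv}\gtrless 1/4$, and invoke the bipartite auxiliary graph $H$ to handle the $d_{uv}\le 1/4$ subcase---correctly observing that at $p=1$ the double-counting collapses to $\sum_{u\in R_2}\deg_H(u)=|F|=\sum_{v\in R_1}\deg_H(v)\le \sum_{v\in R_1}|N_v^+|$. The paper's Appendix~\ref{sec:l1-norm} proof bypasses this entire detour: after bounding $\sum_u\sum_{v\in N_u^+} d_{uv}\le 3\cdot\textsf{OPT}$ via Claim~\ref{claim: l1-pos-cm}, it simply notes that $f_{uv}>d_{uv}$ on a positive edge only when $u\in R_1$ or $v\in R_1$, so the total increase $\sum_u\sum_{v\in N_u^+}(f_{uv}-d_{uv})$ is at most $2\sum_{u\in R_1}|N_u^+|$, and then runs the same case analysis you outline for $\sum_{u\in R_1}|N_u^+|$. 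The paper's shortcut is cleaner at $p=1$ (no $1/4$ threshold, no bipartite graph), while your route has the virtue of being a direct instantiation of the general-$p$ machinery; both land on the same charging argument for the crux (Case~2a(ii)).
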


\begin{proof}[Proof of Lemma \ref{lem: l1_bdd-pos}]

Fix an optimal clustering $\mathcal{C}$. 
We partition vertices based on membership in $C(u)$ or $\overline{C(u)}$ and let $y$ denote the disagreement vector of $\mathcal{C}$. 
We prove in 
the following claim that the total fractional cost of edges $e \in E^+$ such that $f_e = d_e$ is bounded.  
Then, it just remains to bound the fractional cost of positive edges whose distances are raised to 1 in Step 3 of Definition \ref{def: adj_corr_metric}.
\begin{claim}\label{claim: l1-pos-cm}
The total fractional cost of positive edges for the $\ell_1$-norm objective when
$f_e = d_e$ is a $3$-approximation. 
\end{claim}
\begin{proof}[Proof of Claim \ref{claim: l1-pos-cm}]
For the $\ell_1$-norm, $\textsf{OPT} = \sum_{u \in V} y(u)$.
So for a fixed vertex $u$,
we partition the fractional cost of disagreements from positive edges incident to $u$, i.e. $(u,v) \in E^+$, based on whether vertices $v$ are in $C(u)$ or not:
\begin{align*}
    \sum_{u \in V} \sum_{v \in N_u^+} f_{uv}
    &=  \sum_{u \in V}\sum_{v \in N_u^+ \cap C(u)} d_{uv}+ \sum_{u \in V}\sum_{v \in N_u^+ \cap \overline{C(u)}} d_{uv} \\
    &= \sum_{u \in V}\sum_{v \in N_u^+ \cap C(u)} \frac{|N_u^+ \cap N_v^-| + |N_u^- \cap N_v^+|}{n-|N_u^- \cap N_v^-|} + \sum_{u \in V}\sum_{v \in N_u^+ \cap \overline{C(u)}} d_{uv}. \notag 
\end{align*}
The right-hand sum is easy as we can just use that $d_{uv}$ is at most 1. 
$$\boxed{\sum_{u \in V}\sum_{v \in N_u^+ \cap \overline{C(u)}} d_{uv} \leq \sum_{u \in V}\sum_{v \in N_u^+ \cap \overline{C(u)}} 1 \leq \sum_{u \in V} y(u) = \textsf{OPT}}
$$

For the left-hand sum, we will exploit that $d_{uv}$ is symmetric, and specifically that we can bound the denominator by $|N_u^+|$ or $|N_v^+|$.  
\begin{align*}
    &\sum_{u \in V}\sum_{v \in N_u^+ \cap C(u)} \frac{|N_u^+ \cap N_v^-| + |N_u^- \cap N_v^+|}{n-|N_u^- \cap N_v^-|}
    \leq \sum_{u \in V} \sum_{v \in N_u^+ \cap C(u)}\frac{y(u) + y(v)}{n-|N_u^- \cap N_v^-|}\\
    &\leq \sum_{u \in V} \sum_{v \in N_u^+} \frac{y(u) + y(v)}{n-|N_u^- \cap N_v^-|} 
    = \sum_{u \in V} \sum_{v \in N_u^+} \frac{y(u)}{n-|N_u^- \cap N_v^-|} + \sum_{u \in V} \sum_{v \in N_u^+} \frac{y(v)}{n-|N_u^- \cap N_v^-|}\\
    &= \sum_{u \in V} \sum_{v \in N_u^+} \frac{y(u)}{n-|N_u^- \cap N_v^-|} + \sum_{v \in V} \sum_{u \in N_v^+} \frac{y(v)}{n-|N_u^- \cap N_v^-|} \\
    &\leq \sum_{u \in V} \frac{y(u)}{|N_u^+|} \cdot |N_u^+| + \sum_{v \in V} \frac{y(v)}{|N_v^+|} \cdot |N_v^+| \leq 2\cdot \textsf{OPT}.
\end{align*}
All in all:
$$
    \boxed{\sum_{u \in V}\sum_{v \in N_u^+ \cap C(u)} \frac{|N_u^+ \cap N_v^-| + |N_u^- \cap N_v^+|}{n-|N_u^- \cap N_v^-|}
    \leq 2\cdot \textsf{OPT}}
$$
\end{proof}

Now, we must bound the cost of the positive edges whose distances were raised in Step 3 of Definition \ref{def: adj_corr_metric}. 
Let $R_1 = \{u : \vert\{N_u^- \cap \{v: d_{uv} \leq 0.7\}\vert \geq \frac{10}{3}\cdot \Delta_u\}$.
For any vertex $u$, define the shorthand $R_1(u) = \{N_u^- \cap \{v: d_{uv} \leq 0.7\}\}$
\footnote{The use of a subscript may not be clear here, but it is helpful in the $\ell_p$-norm proof.}; 
recall that $u \in R_1$ is put in its own cluster by the adjusted correlation metric.
Then, our next goal is to show that
\begin{equation} \label{eq: ell1_R1}
2 \cdot \sum_{u \in R_1} |N_u^+| = O(\textsf{OPT}).
\end{equation}
Note that the factor of 2 comes from the observation that even if $u \not \in R_1$, $u$ may have neighbors $v \in R_1 \cap N_u^+$, in which cases $f_{uv}$ is raised to 1 even though Step 3 is not executed on $u$. Fix a $u \in R_1$.
We subdivide into several cases based on $\mathcal{C}$.

\paragraph{Case 1: }\emph{At least a 0.15 fraction of $N_u^+$ is in clusters other than $C(u)$.} \\
Let $ u \in V^{1}$ be the vertices in this case.
There are at least $0.15\cdot |N_u^+|$ disagreements incident to $u$, so 
$$\boxed{\sum_{u \in V^1} |N_u^+| \leq \sum_{u \in V^1} \frac{1}{0.15} \cdot y(u) \leq \frac{20}{3}\cdot \textsf{OPT}.}$$

\paragraph{Case 2: }\emph{At least a 0.85 fraction of $N_u^+$ is in $C(u)$.} \\
We subdivide Case 2 into two cases depending on how much 
of $R_1(u)$ is in $\overline{C(u)}$.\\

\noindent \textbf{Case 2a:} \emph{At least half of $R_1(u)$ is in clusters other than $C(u)$.} \\
Let $ u \in  V^{2a}$ be the vertices in this case.
Denote this subset of $R_1(u)$ by $R'_1(u) = R_1(u) \cap \overline{C(u)}$, and note that $|R'_1(u)| \geq \frac{5}{3}\cdot \Delta_u$. If the number of negative edges incident to $u$ that are contained in $C(u)$ is more than $\Delta_u$, then charge $|N_u^+| = \Delta_u$ to $y(u)$, since these negative edges are disagreements. These $u \in V^{2a}$ thus contribute $\textsf{OPT}$ to (\ref{eq: ell1_R1}). 

Otherwise, $|C(u)| \leq 2\Delta_u$.  Consider $v \in R'_1(u)$. Again, let $N_{u,v} = N_u^+ \cap N_v^+ \cap C(u)$. Since $d_{uv} \leq 0.7$ and by the assumption bringing us into Case 2, we apply Proposition \ref{prop: pos_overlap} to see that $|N_{u,v}| \geq 0.15 \cdot |N_u^+|$. 
For each $v \in R'_1(u)$ and $w \in N_{u,v}$, the positive edge $(w,v)$ is a disagreement.

   For $u \in  V^{2a}$ and $w \in N_u^+ \cap C(u)$, define 
    $$\varphi(u,w) = |R_1'(u)\cap N_w^+  |.$$
    Let each $w \in N_u^+ \cap C(u)$ dispense charge $\varphi(u,w)/|C(u)|$ to $u$.
First, observe that $u$ receives $\Omega(|N_u^+|)$ charge; specifically, $u$ receives charge  at least 
 \begin{align*}
    \sum_{w \in N_u^+ \cap C(u)} \frac{\varphi(u,w)}{|C(u)|}
    &= \sum_{w \in N_u^+ \cap C(u)} \frac{|R_1'(u)  \cap N_w^+|}{|C(u)|}  \\
       & = \frac{1}{|C(u)|}\cdot\sum_{w \in N_u^+ \cap C(u)} \sum_{v \in R_1'(u)  \cap N_w^+} 1  
    = \frac{1}{|C(u)|}\cdot\sum_{v \in R_1'(u)} \sum_{\underset{ \cap C(u) \cap N_u^+}     {w \in  \cap N_v^+}} 1 \\
        &= \frac{1}{|C(u)|}\cdot\sum_{v \in R_1'(u)} | C(u) \cap N_u^+ \cap N_v^+| \geq 
        \frac{1}{|C(u)|}\cdot\sum_{v \in R_1'(u)} 0.15\cdot | N_u^+ | \\
    &\geq \frac{1}{|C(u)|}\cdot 0.15 \cdot |N_u^+| \cdot |R_1'(u)| \geq \frac{1}{2\Delta_u}\cdot 0.15 \cdot \Delta_u \cdot \frac{5}{3}\Delta_u = \frac{1}{8} \cdot \Delta_u.
 \end{align*}

Next we upper bound the charge dispensed in total to \emph{all} $u \in V^{2a}$. Note by definition that $\varphi(u,w) \leq y(w)$. Every vertex $w \in V$ dispenses at most
    $$\frac{y(w)}{|C(u)|} = \frac{y(w)}{|C(w)|}$$
    charge to each $u \in C(w) \cap N_w^+$. 
    So each $w$ dispenses a total of charge at most 
    \[\frac{y(w)}{|C(w)|} \cdot |C(w)| = y(w)\]
     to all $u \in V^{2a}$. 
    Putting the lower and upper bounds together:  
    \begin{align*}
        \sum_{w \in V} y(w) &\geq \text{total charge dispensed} \\
        &\geq \sum_{u \in  V^{2a}: |C(u)| \leq 2\Delta_u} \frac{1}{|C(u)|} \sum_{w \in N_u^+ \cap C(u)} \varphi(u,w) \\
        &\geq \sum_{u \in  V^{2a}: |C(u)| \leq 2\Delta_u} \frac{1}{8} \cdot |N_u^+|
    \end{align*}
    so 
    \[\boxed{\sum_{u \in  V^{2a}} |N_u^+| \leq \textsf{OPT} + 8  \cdot \sum_{w \in V} y(w) \leq 9 \cdot \textsf{OPT}.}\]

\noindent \textbf{Case 2b:} \emph{At least half of $R_1(u)$ is in $C(u)$.}\\
Let $ u \in  V^{2b}$ be the vertices in this case.
There are at least $\frac{5}{3}\Delta_u$ negative disagreeing edges incident to $u$ (those edges $(u,w)$ with $w \in R_1(u)$). So we may charge $|N_u^+|$ to $y(u)$:
$$\boxed{\sum_{u \in V^{2b}} |N_u^+| \leq  \textsf{OPT}}$$
This concludes the bounding of the fractional cost of the negative edges, 
and our constant in the approximation factor is $3+ 20/3 + 9 + 1 < 20$ (adding together the boxed upper bounds).
\end{proof}

\begin{lemma} \label{lem: l1_bdd-neg}
    The fractional cost of the adjusted correlation metric $f$ in the $\ell_1$-norm objective for the set of negative edges is a constant factor approximation to the optimal, i.e.,
    $$ \sum_{u \in V} \sum_{v \in N_u^-} (1-f_{uv}) \leq 40\cdot \textsf{OPT}.$$
\end{lemma}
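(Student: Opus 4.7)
The plan is to mirror the structure of Lemma \ref{lem: lp_bdd-neg} specialized to $p=1$, which should yield the simpler arithmetic needed to hit the constant $40$. First, I would split the sum by whether $v \in C(u)$ or $v \in \overline{C(u)}$:
\[
\sum_{u \in V}\sum_{v \in N_u^-}(1-f_{uv}) \;\leq\; \sum_{u \in V}\sum_{v \in N_u^- \cap C(u)} 1 \;+\; \sum_{u \in V}\sum_{v \in N_u^- \cap \overline{C(u)}} (1-f_{uv}).
\]
The first term is at most $\textsf{OPT}$ because every such edge is a negative disagreement, contributing $1$ to $y(u)$.

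For the second term, I would use that if $u \in R_1$ then $f_{uv} = 1$ for all $v \in V \setminus \{u\}$, so these $u$ contribute nothing. Hence the sum reduces to $u \in R_2$. For $u \in R_2$ and $v \in N_u^- \cap \overline{C(u)}$, the value $f_{uv}$ equals $d_{uv}$ unless $d_{uv}>0.7$ (Step 2) or $v \in R_1$ (both raise $f_{uv}$ to $1$). Thus nonzero contributions come only from $v \in V_u := \{v \in N_u^- \cap \overline{C(u)} : d_{uv} \leq 0.7\}$, giving the bound $\sum_{u \in R_2} \sum_{v \in V_u}(1-d_{uv}) \leq \sum_{u \in R_2}|V_u|$. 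Crucially, since $u \in R_2$, the definition of Step 3 gives $|V_u| \leq |N_u^- \cap \{v : d_{uv} \leq 0.7\}| \leq \tfrac{10}{3}\Delta_u$.

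Now I would case-split on the clustering structure around $u$, exactly as in the $\ell_p$ proof. \textbf{Case 1:} at least $0.15$ of $N_u^+$ lies outside $C(u)$. Then $y(u) \geq 0.15\,\Delta_u$, so $|V_u| \leq \tfrac{10}{3}\Delta_u \leq \tfrac{200}{9}\,y(u)$; summing yields $\tfrac{200}{9}\,\textsf{OPT}$. \textbf{Case 2a:} at least $0.85$ of $N_u^+$ is in $C(u)$ and $|N_u^- \cap C(u)| \geq \Delta_u$. Then $y(u) \geq \Delta_u$, so $|V_u| \leq \tfrac{10}{3}\,y(u)$, summing to $\tfrac{10}{3}\,\textsf{OPT}$.

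The main obstacle is \textbf{Case 2b}: at least $0.85$ of $N_u^+$ is in $C(u)$ and $|N_u^- \cap C(u)| < \Delta_u$, which forces $|C(u)| \leq 2\Delta_u$. Here I would run the charging argument: for each $v \in V_u$, Proposition \ref{prop: pos_overlap} applies (since $d_{uv} \leq 0.7$ and the $0.85$ hypothesis holds), giving $|N_u^+ \cap N_v^+ \cap C(u)| \geq 0.15\,|N_u^+|$; and for every such $w$, the edge $(w,v)$ is a positive disagreement because $v \notin C(u)$. Define $\varphi(u,w) = |V_u \cap N_w^+|$ for $w \in N_u^+ \cap C(u)$, so that by double counting
\[
\sum_{w \in N_u^+ \cap C(u)} \varphi(u,w) \;=\; \sum_{v \in V_u} |N_u^+ \cap N_v^+ \cap C(u)| \;\geq\; |V_u|\cdot 0.15\,\Delta_u.
\]
Charging each $w \in N_u^+ \cap C(u)$ the amount $\varphi(u,w)/|C(u)|$ to $u$, the total charge received by $u$ is at least $\tfrac{0.15\,\Delta_u\,|V_u|}{2\Delta_u} = \tfrac{3}{40}\,|V_u|$. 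For the upper bound, every $w \in V$ dispenses charge only to $u \in C(w) \cap N_w^+$, and $\varphi(u,w) \leq y(w)$, so $w$ dispenses total charge at most $|C(w)| \cdot y(w)/|C(w)| = y(w)$. Combining, $\sum_{u \in V^{2b}} \tfrac{3}{40}|V_u| \leq \sum_w y(w) = \textsf{OPT}$, giving $\sum_{V^{2b}} |V_u| \leq \tfrac{40}{3}\,\textsf{OPT}$. Summing all contributions gives $1 + \tfrac{200}{9} + \tfrac{10}{3} + \tfrac{40}{3} = \tfrac{359}{9} < 40$, as claimed.
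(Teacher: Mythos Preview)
Your proposal is correct and follows essentially the same approach as the paper's own proof: the same split into $v\in C(u)$ versus $v\in\overline{C(u)}$, the same reduction to $u\in R_2$ and the set $V_u$, the same three-case analysis (with the identical charging argument via $\varphi(u,w)$ in Case~2b), and the same final arithmetic $1+\tfrac{200}{9}+\tfrac{10}{3}+\tfrac{40}{3}<40$. If anything, you are slightly more explicit than the paper about why the sum restricts to $u\in R_2$ before invoking the $|V_u|\le \tfrac{10}{3}\Delta_u$ bound.
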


\begin{proof}[Proof of Lemma \ref{lem: l1_bdd-neg}]
First observe that negative edges that have been rounded up to 1 contribute 0 to the fractional cost. So we may focus on the fractional cost of negative edges $e=(u,v)$ with $d_{uv} = f_{uv} \leq 0.7$. \\

Fix an optimal clustering $\mathcal{C}$, whose value is $\textsf{OPT}$.
If $d_{uv} \leq 0.7$ but $u$ and $v$ are in the same cluster in $\mathcal{C}$,
then $(u,v)$ is a disagreement with respect to $\mathcal{C}$,
so we can charge the fractional cost $1-f_e = 1-d_e \leq 1$ to \boxed{$\textsf{OPT}$}. 

Now we need to consider the negative edges whose endpoints are in different clusters in $\mathcal{C}$.
Recall $C(u)$ is the cluster of $\mathcal{C}$ containing vertex $u$. 
We define the set of \emph{cut edges} incident to $u$ to be
$$E_u = \{(u,v) : (u,v) \in E^-, d_{uv} \leq 0.7, C(u) \neq C(v) \}.$$ 

By Step 3 in the definition of the adjusted correlation metric, $|E_u| \leq 10/3\cdot\Delta_u$. Since the fractional cost of each cut edge is $1-f_{uv} = 1-d_{uv} \geq 0.3 = \Omega(1)$, we need to show that the total number of cut edges is bounded by $\textsf{OPT}$, i.e.,
$\sum_{u \in V} |E_u| = O(\textsf{OPT}).$
We let $V_u = \{v: (u,v) \in E_u\} $,
so our goal is equivalent to showing that 
$\sum_{u \in V}|V_u| = O(\textsf{OPT}).$
The definition of $V_u$ is the same as $R'_1(u)$ in the proof of the previous lemma, 
but $R_1'(u)$ was only defined for $u \in R_1$.

We will charge the value of $\sum_{u \in V} |V_u|$ to disagreements in $\mathcal{C}$. We use $y(u)$ to denote the number of disagreements incident to $u$ with respect to $\mathcal{C}$. 

Fix a vertex $u$. We consider several cases based on the clustering $\mathcal{C}$. 

\paragraph{Case 1: }\emph{At least a 0.15 fraction of $N_u^+$ is in clusters other than $C(u)$.}  \\
Let $V^1$ be the vertices of $V$ in this case. 
For any such $u \in V^1$, 
there are at least $0.15\cdot \Delta_u$ disagreements incident to $u$ in $\mathcal{C}$, 
so $0.15\cdot \Delta_u\leq y(u).$ 
Combining this with the fact that $|V_u| \leq \frac{10}{3} \cdot \Delta_u $, 
we see that
$$|V_u| \leq \frac{10}{3}\cdot \Delta_u \leq \frac{200}{9} \cdot y(u).$$
Summing over all $u \in V^1$, we see that
$$\boxed{\sum_{u \in V^1} |V_u| \leq \frac{200}{9} \cdot \sum_{u \in V} y(u) = \frac{200}{9} \cdot \textsf{OPT}.}$$

\paragraph{Case 2: }\emph{At least a 0.85 fraction of $N_u^+$ is in $C(u)$.} \\
Let $V^2$ be the vertices of $V$ in this case. 
For any such $u \in V^2$, we will show that for every $v \in V_u$, 
there is large overlap between $C(u)$, $N_u^+$, and $N_v^+$. 
Proposition \ref{prop: pos_overlap} will help us as we further refine Case 2 into two settings: 
\medskip

\noindent \textbf{Case 2a:} \emph{At least 0.85 fraction of $N_u^+$ is in $C(u)$, and there are at least $\Delta_u$ negative edges incident to $u$ with both endpoints in $C(u)$.}\\ 
Let $V^{2a}$ be the vertices of $V^2$ in this case. 
Since these negative edges are disagreements, we may charge $|V_u|$ to $y(u)$:
$$\boxed{\sum_{u \in V^{2a}} |V_u| \leq \frac{10}{3} \cdot \sum_{u \in V} \Delta_u \leq \frac{10}{3}\cdot \sum_u y(u) = \frac{10}{3} \cdot \textsf{OPT}.}$$

\noindent \textbf{Case 2b:} \emph{At least 0.85 fraction of $N_u^+$ is in $C(u)$, and $|C(u)| \leq 2 \cdot \Delta_u$.}\\
Observe Cases 2a and 2b cover the full range of Case 2. 
This is because we can write $|C(u)| = |C(u)\cap N_u^+| + |C(u)\cap N_u^-|$, 
and if there are at \emph{most} $\Delta_u$ negative edges incident to $u$ with both endpoints in $C(u)$, this is equivalent to saying $|C(u)\cap N_u^-| \leq \Delta_u$.
So, both $|C(u)\cap N_u^+|$ and $|C(u)\cap N_u^-|$ are upper bounded by $\Delta_u$ (since $\Delta_u = |N_u^+|$).

\smallskip

Let $V^{2b}$ be the vertices of $V^2$ in this case. 
Let $N_{u,v} := N_v^+ \cap N_u^+ \cap C(u)$. We know from Proposition \ref{prop: pos_overlap} (for $v \in R_1(u)'$ and using the assumption bringing us into Case 2) that $|N_{u,v}| \geq 0.15 \cdot \Delta_u$.  
Observe that for $v \in V_u \subseteq N_u^- \cap \overline{C(u)}$ 
and $w \in N_{u,v}$, the positive edge $(w,v)$ is a disagreement.

Fix $w \in N_u^+ \cap C(u)$ and $u \in V^{2b}$. Define $\varphi(u,w) = |V_u \cap N_w^+|$. We say that $w$ dispenses $\varphi(u,w)/|C(u)|$ charge to $u$. We would like to show that each $u \in V^{2b}$ receives $\Omega(|V_u|)$ charge, and that each $w$ dispenses at most $y(w)$ charge. By double counting,  
$$\sum_{w \in N_u^+ \cap C(u)} \varphi(u,w) \geq |V_u| \cdot 0.15\Delta_u $$
where we have used that $\varphi(u,w)$ is equal to the number of $v \in V_u$ such that $w \in N_{u,v}$, and we know that $|N_{u,v}|$ is at least $0.15 \Delta_u$. So, the total amount of charge that $u \in V^{2b}$ receives is 
$$\sum_{w \in N_u^+ \cap C(u)} \frac{\varphi(u,w)}{|C(u)|} \geq |V_u| \cdot 0.15 \Delta_u / |C(u)| \geq 3/40 \cdot |V_u|$$
since $|C(u)| \leq 2\Delta_u$ by assumption of the case. 

Now we need to upper bound the amount of charge that each $w \in V$ dispenses in total. Note that $\varphi(u,w) \leq y(w)$, and $w$ only dispenses to vertices $u \in C(w) \cap N_w^+$. Further, $C(u) = C(w)$. So $w$ dispenses at most $|C(w)| \cdot y(w)/|C(u)| = y(w)$ charge in total. Putting everything together: 

\begin{align*}
     \sum_{w \in V} y(w) &\geq \text{total charge dispensed} \\
     &\geq \sum_{u \in V^{2b}}  \sum_{w \in N_u^+ \cap C(u)} \frac{\varphi(u,w)}{|C(u)|} 
     \geq \sum_{u \in V^{2b}} \frac{3}{40} \cdot |V_u|.
 \end{align*}
 So in all,
 \[\boxed{\sum_{u \in V^{2b}} |V_u| \leq \frac{40}{3} \cdot \sum_{w \in V} y(w) = \frac{40}{3} \cdot \textsf{OPT}.}\]
This concludes the bounding of the fractional cost of the negative edges, 
and our constant in the approximation factor is $1 + 200/9 + 10/3 + 40/3 < 40$ (adding together the boxed upper bounds).

\end{proof}

\section{KMZ Rounding Algorithm and Run-time}
\label{sec: lp_rounding_alg}

We construct a fractional solution with our adjusted correlation metric (recall we use this in place of an LP solution), and then
input this solution to a rounding algorithm by Kalhan, Makarychev, and Zhou \citeyearpar{KMZ19}.

First, we set up some notation for the rounding algorithm, 
mainly following the example set by Kalhan, Makarychev, and Zhou.
Let the ball of radius $\rho$, with respect to the distance defined by 
any semi-metric $z$ on $V$, around vertex $u \in V$ 
be denoted $\textsf{Ball}(u,\rho) = \{v \in V \mid z_{uv} \leq \rho\}$.
The algorithm is iterative, and continues to find new clusters until every vertex is clustered.
Let the set of unclustered vertices at time $t$ be denoted by $V_t \subseteq V$.
The cluster center in $V_t$ is the vertex maximizing the quantity
$$
L_t(u)= \sum_{v \in \textsf{Ball}(u,r) \cap V_t} (r-z_{uv})
$$
for $r$ a parameter input to the algorithm.
Note that when $L_t$ is large,
vertices in $\textsf{Ball}(u,r) \cap V_t$ are closely clustered together.\\

\begin{algorithm} \label{KMZ-alg}[Rounding algorithm]
\end{algorithm}

\begin{minipage}{13cm}
\rule{12cm}{0.4pt}\\
\noindent \textbf{Input: }Semi-metric $z$ on $V$.

\noindent \textbf{Output: }Clustering $\mathcal{C}$.
\begin{enumerate}
    \item Let $V_0 = V$, $r = 1/5$, $t=0$.
    \item \textbf{while} ($V_t \neq \emptyset$)
    \begin{itemize}
        \item Find $u^*_t = \arg \max_{u \in V_t} L_t(u) =  \arg \max_{u \in V_t} \sum_{v \in \text{Ball}(u,r) \cap V_t} {r-z_{uv}}$. 
        \item Create ${C_t = \text{Ball}(u^*_t,2r) \cap V_t}$.
        \item Set $V_{t+1} = V_t \setminus C_t$ and $t=t+1$.
    \end{itemize}
    \item Return ${\mathcal{C} = (C_0, \dots, C_{t-1})}$.
\end{enumerate}
    \rule{12cm}{0.4pt}
\end{minipage}\\

We define $\textsf{LP}(u,v)$ to be $(u,v)$'s cost to LP \ref{KMZ_LP} in the objective, i.e., $x_{uv}=\textsf{LP}(u,v)$ for $(u,v) \in E^+$ and 
$x_{uv}=1-\textsf{LP}(u,v)$ if $(u,v) \in E^-$. 
Similarly, we define a cost for each edge with respect to 
Algorithm \ref{KMZ-alg},
$\textsf{ALG}(u,v) = \mathds{1}((u,v) \text { is a disagreement})$.
Then summing over all edges incident to vertex $u$, we obtain the disagreements to each vertex, $\textsf{ALG}(u) = \sum_{v \in V} \textsf{ALG}(u,v)$.

The technical work of Kalhan, Makarychev, and Zhou \citeyearpar{KMZ19} is in showing that 
\begin{equation}\label{eq: KMZ}
    \textsf{ALG}(u) = \sum_{v \in V}\textsf{ALG}(u,v) \leq 
5 \cdot \sum_{v \in V}\textsf{LP}(u,v) = 5 y(u).
\end{equation}

Davies, Moseley, and Newman \citeyearpar{DMN23} discuss the run-time of Algorithm \ref{KMZ-alg} (see Appendix A in their paper).
The rounding algorithm has run-time $O(n^2)$ for general graphs, 
and the full run-time of the KMZ algorithm is dominated by the time to solve the LP.

\end{document}